\newtheorem{definition}{Definition}
\newtheorem{theorem}{Theorem}
\newtheorem{lemma}{Lemma}
\newtheorem{corollary}{Corollary}
\newtheorem{proposition}{Proposition}
\newtheorem{example}{Example}
\begin{document}

\title{Frugal Online Incentive Mechanisms for Crowdsourcing Tasks Truthfully}
\author{Dong~Zhao,
        Huadong~Ma,~\IEEEmembership{Member,~IEEE,}
        and~Liang~Liu
\IEEEcompsocitemizethanks{\IEEEcompsocthanksitem D. Zhao, H.-D. Ma, and L. Liu are with the Beijing Key Lab of Intelligent Telecomm.
Software and Multimedia, Beijing University of Posts and Telecomm., Beijing, 100876, China. E-mail: \{dzhao, mhd\}@bupt.edu.cn, liangliu.bupt@gmail.com.}
}

\maketitle

\begin{abstract}
Mobile Crowd Sensing (MCS) is a new paradigm which takes advantage of pervasive smartphones to efficiently collect data, enabling numerous novel applications.
To achieve good service quality for a MCS application, incentive mechanisms are necessary to attract more user participation.
Most of existing mechanisms apply only for the \emph{offline} scenario where all users' information are known a priori.
On the contrary, we focus on a more realistic scenario where users arrive one by one \emph{online} in a random order.
Based on the \emph{online auction} model, we investigate the problem that users submit their private profiles to the crowdsourcer when they arrive, and the crowdsourcer aims at selecting a subset of users before a specified deadline for minimizing the total payment while a specific number of tasks can be completed.
We design three \emph{online mechanisms}, \emph{Homo-OMZ}, \emph{Hetero-OMZ} and \emph{Hetero-OMG}, all of which can satisfy the \emph{computational efficiency}, \emph{individual rationality}, \emph{cost-truthfulness}, and \emph{consumer sovereignty}.
The \emph{Homo-OMZ} mechanism is applicable to the homogeneous user model and can satisfy the \emph{social efficiency} but not \emph{constant frugality}.
The \emph{Hetero-OMZ} and \emph{Hetero-OMG} mechanisms are applicable to both the homogeneous and heterogeneous user models, and can satisfy the \emph{constant frugality}.
Besides, the \emph{Hetero-OMG} mechanism can also satisfy the \emph{time-truthfulness}.
Through extensive simulations, we evaluate the performance and validate the theoretical properties of our online mechanisms.
\end{abstract}

\begin{keywords}
Crowdsourcing, Mobile Sensing, Incentive Mechanism, Frugal Mechanism, Online Auction.
\end{keywords}

\section{Introduction}
\label{sec:introduction}
Crowdsourcing is a distributed problem-solving model in which a crowd of undefined size is engaged to solve a complex problem through an open call \cite{chatzimilioudis2012crowdsourcing}.
Nowadays, the proliferation of smartphones provides a new opportunity for extending existing web-based crowdsourcing applications to a larger contributing crowd, making contribution easier and omnipresent.
Furthermore, today's smartphones are programmable and come with a rich set of cheap powerful embedded sensors, such as GPS, WiFi/3G/4G interfaces, accelerometer, digital compass, gyroscope, microphone, and camera.
The great potential of mobile phone sensing offers a variety of novel, efficient ways to opportunistically collect data, enabling numerous \emph{mobile crowd sensing} (MCS) applications, such as Sensorly \cite{website:Sensorly} for constructing cellular/WiFi network coverage maps, SignalGuru \cite{koukoumidis2011signalguru}, Nericell \cite{mohan2008nericell} and VTrack \cite{thiagarajan2009vtrack} for providing traffic information, Ear-Phone \cite{rana2010ear} and NoiseTube \cite{stevens2010crowdsourcing} for making noise maps, and LiFS \cite{yang2012locating} for indoor localization.
For more details, we refer interested readers to several survey papers \cite{lane2010survey,ganti2011mobile,chatzimilioudis2012crowdsourcing}.

Adequate user participation is one of the most critical factors determining whether a MCS application can achieve good service quality.
Most of the current MCS applications \cite{website:Sensorly,koukoumidis2011signalguru,mohan2008nericell,thiagarajan2009vtrack,rana2010ear,stevens2010crowdsourcing,yang2012locating} are based on voluntary participation.
While participating in a MCS campaign, smartphone users consume their own resources such as battery and computing power, and expose their locations with potential privacy threats.
Thus, incentive mechanisms are necessary to provide participants with enough rewards for their participation costs.
Most of existing mechanisms \cite{danezis2005much,lee2010sell,duan2012incentive,yang2012crowdsourcing,jaimes2012location} apply only for the \emph{offline} scenario as illustrated in Fig. \ref{fig_offlineScenario}, in which all of participating users report their profiles, including the tasks that they can complete and the bids, to the crowdsourcer (campaign organizer) in advance, and then the crowdsourcer selects a subset of users after collecting the information of all users to achieve a specific objective.
Generally, there are two classes of incentive mechanisms with different objectives: the \emph{budget feasible} mechanisms which aim at maximizing the crowdsourcer's utility (e.g., the total value of all tasks that can be completed by selected users) under a specific budget constraint, and the \emph{frugal} mechanisms which aim at minimizing the crowdsourcer's total payment under the condition that the specific tasks can be completed.
\begin{figure*}[!t]
  \centering{
  \subfigure[Offline scenario]{
    \label{fig_offlineScenario}
    \includegraphics[width=2.1in]{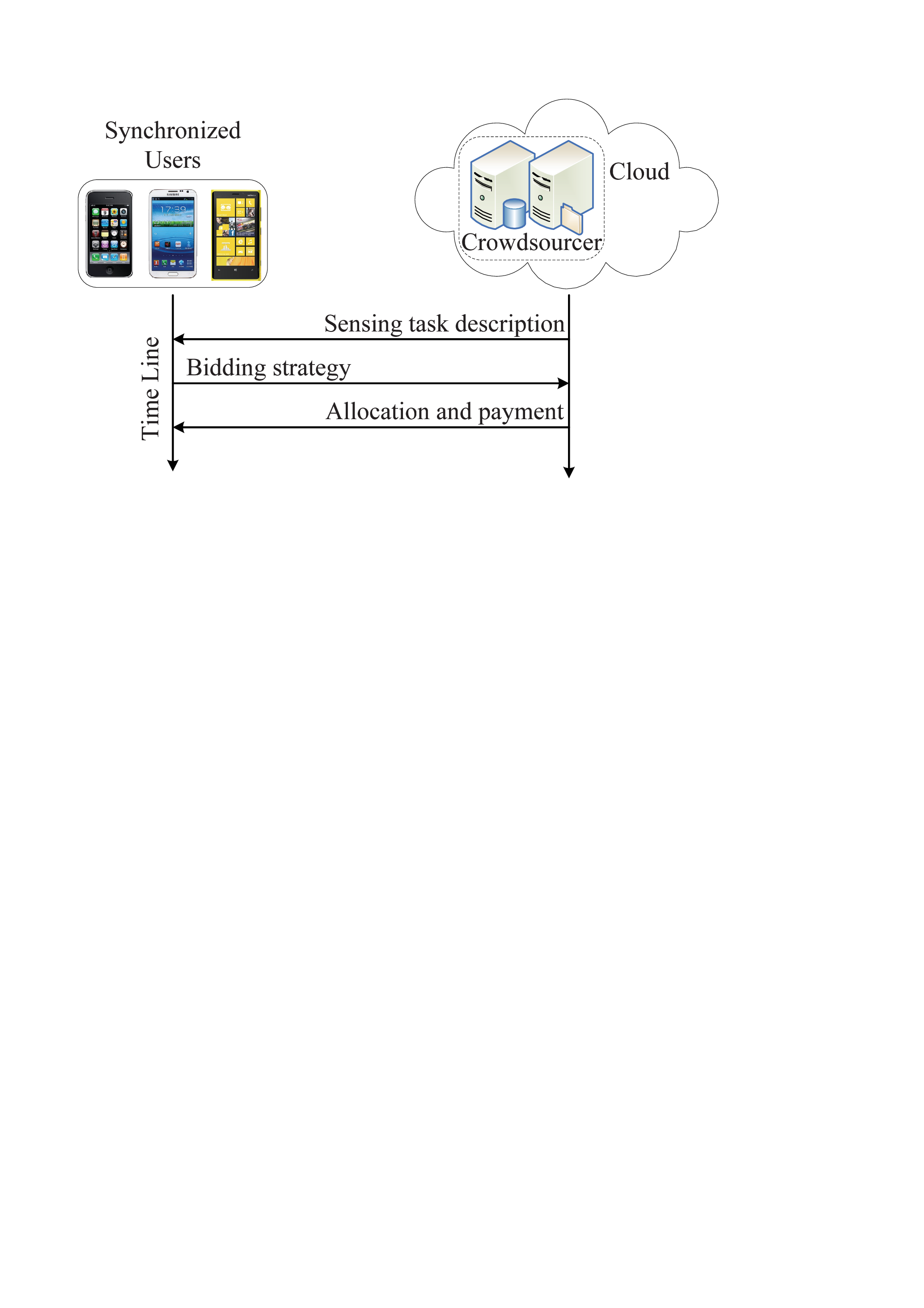}}
  \subfigure[Online scenario: zero arrival-departure interval model]{
    \label{fig_onlineModel1}
    \includegraphics[width=2.0in]{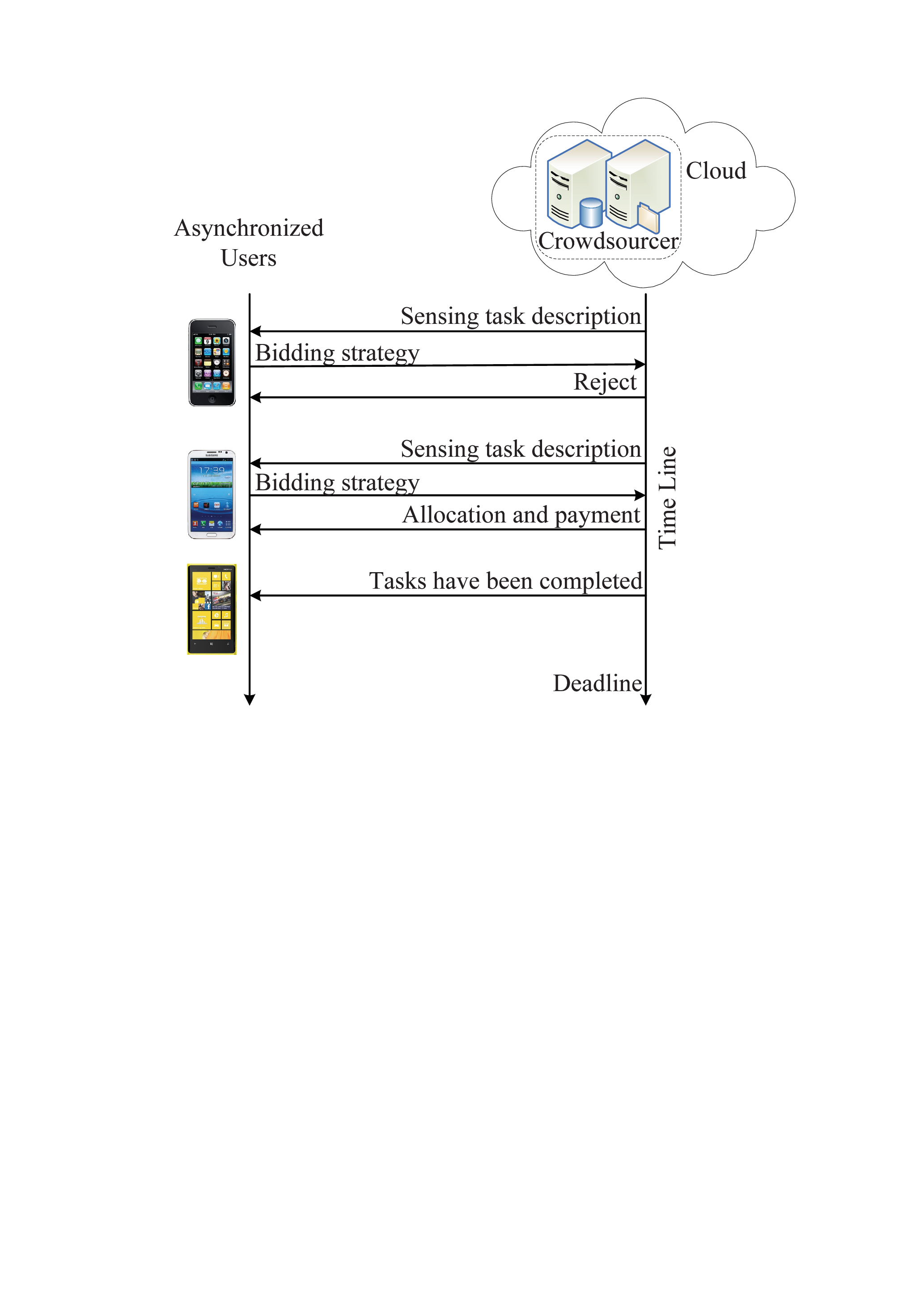}}
    \subfigure[Online scenario: general interval model]{
    \label{fig_onlineModel2}
    \includegraphics[width=2.25in]{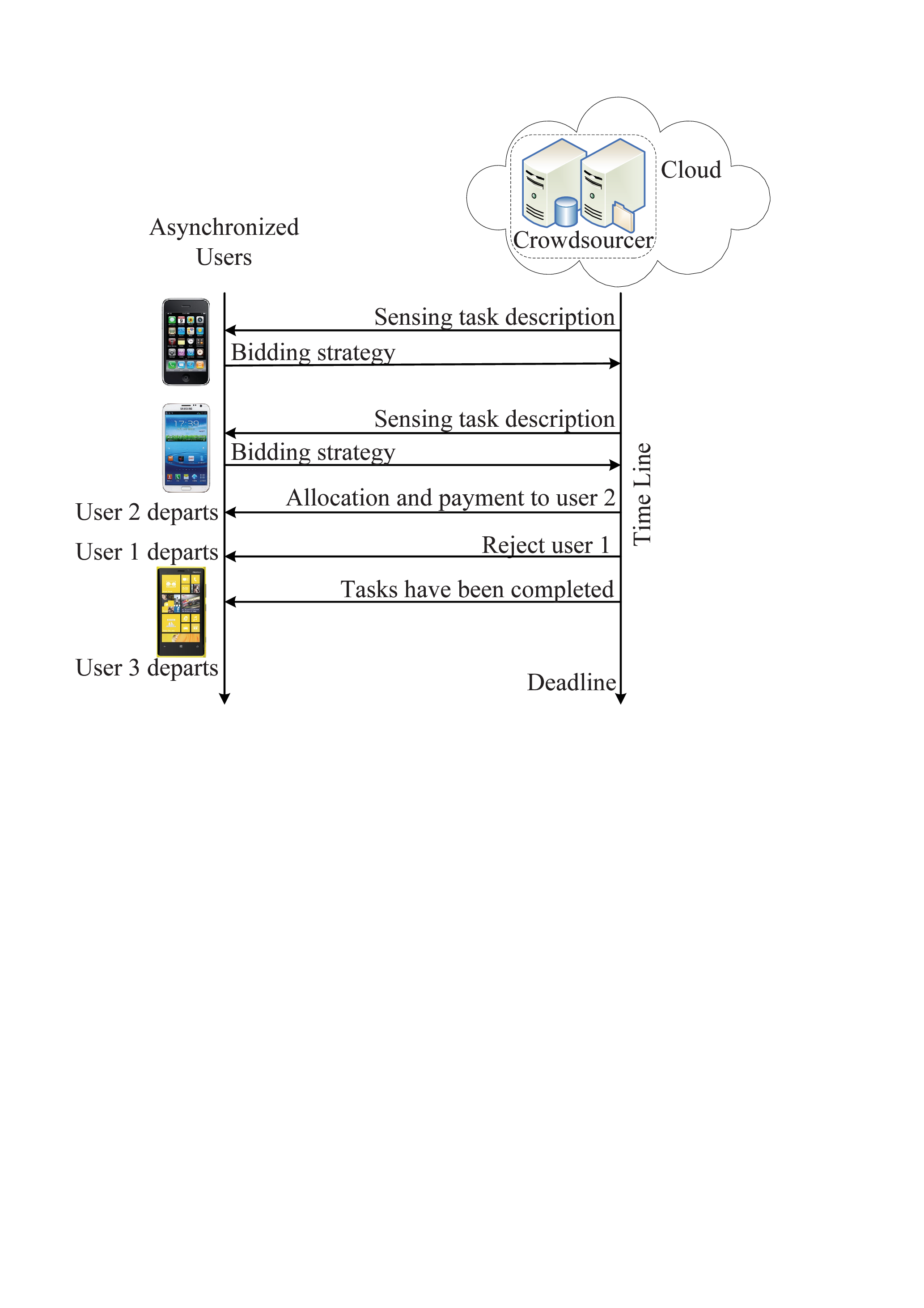}}
  }
  \caption{The comparison of offline and online scenarios for MCS.}
  \label{fig-systemModel}\vspace{-10pt}
\end{figure*}
\begin{table*}[!t]\setlength{\tabcolsep}{2pt}\footnotesize
\begin{center}
\caption{Comparison of various mechanisms proposed in this paper.}
\label{table-comparison}
\begin{tabular}{ccccccccccc}
\hline
    \multirowcell{2}[-.5ex][c]{Mechanism} & homogeneous & heterogeneous & i.i.d. & secretary & computational & individual & cost- & consumer & constant & time-truthfulness\\
    & user model & user model & model & model & efficiency & rationality & truthfulness & sovereignty & frugality & (general interval model)\\
   \hline
   Homo-OMZ & Yes & No & Yes & Yes & Yes & Yes & Yes & Yes & No & NA\\
   Hetero-OMZ & Yes & Yes & Yes & Yes & Yes & Yes & Yes & Yes & Yes & NA\\
   Hetero-OMG & Yes & Yes & Yes & Yes & Yes & Yes & Yes & Yes & Yes & Yes\\
   \hline
   \end{tabular}
\end{center}\vspace{-15pt}
\end{table*}

However, in most of MCS applications users always arrive one by one \emph{online} in a random order and user availability changes over time, which reflects the opportunistic characteristics of human mobility in essence.
For example, in Sensorly \cite{website:Sensorly} or Ear-Phone \cite{rana2010ear}, the crowdsourcer allocates tasks to the smartphone users for sensing Wi-Fi signals or environmental noises only when the users arrive in the area of interest opportunistically.
Therefore, an \emph{online incentive mechanism} is necessary to make irrevocable decisions on whether to accept a user's task and bid \emph{sequentially}, based solely on the information of users arriving before the present moment, without knowing future information, as illustrated in Fig. \ref{fig_onlineModel1} and Fig. \ref{fig_onlineModel2}.

Although the \emph{budget feasible} online incentive mechanisms have been investigated recently \cite{singer2013pricing,singla2013truthful,badanidiyuru2012learning,zhao2014crowdsourcing}, in this paper we focus on the online incentive mechanisms under the framework of \emph{frugality}.
Specially, we consider the following problem: the crowdsourcer aims at selecting a subset of users before a specified deadline, so that the total payment to the selected users can be minimized under the condition that the specific number of tasks can be completed.
We further assume that the cost and arrival/departure time of each user are private and only known to itself.
Users are assumed to be game-theoretic and seek to make strategy (possibly report an untruthful cost or arrival/departure time) to maximize their individual utility in equilibrium.
Thus, the problem of selecting crowdsourcing users while minimizing the total payment can be modeled as an \emph{online auction}.

Our objective is to design online mechanisms satisfying five desirable properties: \emph{computational efficiency}, \emph{individual rationality}, \emph{truthfulness}, \emph{consumer sovereignty} and \emph{constant frugality}.
Informally, \emph{computational efficiency} ensures the mechanism can run in real time, \emph{individual rationality} ensures each participating user has a non-negative utility, \emph{truthfulness} ensures the participating users report their true costs (\emph{cost-truthfulness}) and arrival/departure times (\emph{time-truthfulness}), \emph{consumer sovereignty} ensures each participating user has a chance to win the auction, and \emph{constant frugality} guarantees that the mechanism performs close to the optimal solution in terms of total payment in the offline scenario where all users' information are known a priori.

In this paper, we consider various user models comprehensively from three perspectives: 1) the \emph{zero arrival-departure interval} model (as illustrated in Fig. \ref{fig_onlineModel1}) and the \emph{general interval} model (as illustrated in Fig. \ref{fig_onlineModel2}) with respect to the interval between the arrival time and the departure time of each user, 2) the \emph{homogeneous} user model and the \emph{heterogeneous} user model with respect to the number of tasks that each user can complete, and 3) the \emph{i.i.d.} model and the \emph{secretary} model with respect to the distribution of users.
We propose three online mechanisms, \emph{Homo-OMZ}, \emph{Hetero-OMZ} and \emph{Hetero-OMG}, that are applicable to different models and hold different properties, as shown in Table \ref{table-comparison}.
Note that achieving time-truthfulness is trivial in the \emph{zero arrival-departure interval} model.
The main idea behind our online mechanisms is to adopt a multiple-stage sampling-accepting process.
At every stage the mechanism allocates tasks to a user only if its bid is not less than a certain bid threshold that has been computed using previous users' information, and the number of tasks allocated for the current stage has not been achieved.
Meanwhile, the user obtains a bid-independent payment.
The bid threshold is computed in a manner that guarantees desirable performance properties of the mechanism.

The remainder of this paper is organized as follows.
In Section \ref{sec:problem formulation} we describe the MCS system model, and formulate the problem as an online auction.
We present two online mechanisms, \emph{Homo-OMZ} and \emph{Hetero-OMZ}, under the zero arrival-departure interval model in Section \ref{sec:special case}.
We present the \emph{Hetero-OMZ} mechanism under the general interval model in Section \ref{sec:general case}.
Performance evaluations are presented in Section \ref{sec:performance evaluation}.
We review the related work in Section \ref{sec:related work}, and conclude this paper in Section \ref{sec:conclusion}. 

\section{System Model and Problem Formulation}
\label{sec:problem formulation}
\subsection{System Model}
\label{subsec:systemModel}
As illustrated in Fig. \ref{fig-systemModel}, a MCS system consists of a \emph{crowdsourcer}, which resides in the cloud and consists of multiple sensing servers, and many smartphone \emph{users}, which are connected to the cloud by cellular networks (e.g., GSM/3G/4G) or Wi-Fi connections.
The crowdsourcer first publicizes a MCS campaign, aiming at finding some users to complete a specified number $L\in \mathbb{N}^+$ of tasks.
Then the users interested in participating in the campaign report their profiles to the crowdsourcer.
Finally, the crowdsourcer selects a subset of users to perform tasks for minimizing the total payment based on the profiles of users.
Fig. \ref{fig_offlineScenario} shows an offline scenario, where all of participating users report their profiles to the crowdsourcer synchronously, and then the crowdsourcer allocates tasks to a subset of users by considering the profiles of all users at once.
Unlike the batched and synchronized manner in the offline scenario, the interactive process in the online scenario is sequential and asynchronous, as shown in Fig. \ref{fig_onlineModel1} and Fig. \ref{fig_onlineModel2}.
In this paper we focus on the online scenario.
Specially, we assume that the crowdsourcer requires $L$ tasks to be completed before a specific deadline $T$, and a crowd of users $\mathcal{U}=\{1,2,\ldots,n\}$ interested in participating in the campaign arrive online in a random order, where $n$ is unknown.
Each user $i\in \mathcal{U}$ has an arrival time $a_i\in \{1, \ldots, T\}$, a departure time $d_i\in \{1, \ldots, T\}$, $d_i\geq a_i$, and a number of tasks it can complete $\tau_i$.
Meanwhile, user $i$ also has an associated cost $c_i \in \mathbb{R}^+$ for performing a single task.
All information constitutes the \emph{profile} of user $i$, $\theta_i=(a_i,d_i,\tau_i,c_i)$.
The crowdsourcer, when presented with the profile of user $i$, must decide how many tasks to allocate, and how much payment to pay to user $i$ before it departs.
The decisions should be made one by one until the deadline or all the tasks have been allocated.

\subsection{User Model}
\label{subsec:userModel}
In this paper we consider two models with respect to the interval between the arrival time and the departure time of each user:
\begin{itemize*}
\item \textbf{Zero arrival-departure interval model:} The arrival time of each user equals to its departure time.
\item \textbf{General interval model:} There are no limitations on the arrival-departure interval (i.e., it may be zero or non-zero interval).
\end{itemize*}
The first model is reasonable for the MCS applications where the decision has to be made in time, as illustrated in Fig. \ref{fig_onlineModel1}.
For example, in LiFS \cite{yang2012locating}, the users receive task description when they enter the target building, and report the profiles expecting the platform to reply immediately, since they may not want to be disturbed anymore when they are working or shopping in that building.
Nevertheless, in other scenarios, smartphone users may be patient to wait for the reply of the crowdsourcer for some time interval.
For example, a user who is staying in a traffic tool or drinking at a coffee shop may play with the crowdsourcer for some time.
In such scenarios, the general interval model should be considered where the decision could be put off until when the user departs, as illustrated in Fig. \ref{fig_onlineModel2}.

With respect to the number of tasks that each user can complete, we consider the following two models:
\begin{itemize*}
\item \textbf{Homogeneous user model:} Each user can complete only a single task.
\item \textbf{Heterogeneous user model:} Different users can complete different number of tasks.
\end{itemize*}
Obviously, the homogeneous user model is a special case of the heterogeneous user model, and in the first model, we denote the \emph{profile} of each user $i$ as $\theta_i=(a_i,d_i,1,c_i)$.

With respect to the distribution of users, there are two extremes of the modeling spectrums.
At one extreme is the \emph{oblivious adversarial model}, where an adversary chooses a \emph{worst-case} input sequence including the users' costs, task numbers, and their arrival orders.
This view leads to a very pessimistic result: no online mechanisms can achieve constant competitiveness \cite{bar2002incentive}, but this model is unnecessary to consider in a normal crowdsourcing market.
At the other extreme is a sequence of users with task numbers and the respective unit costs i.i.d. sampled from a \emph{known} distribution.
This view may lead to the optimum strategy by using dynamic programming, but it assumes too much knowledge to be practical.
Thus, two natural candidate models in between the two extremes are considered in this paper:

\begin{itemize*}
\item \textbf{The i.i.d. model:} The numbers of tasks that users can complete and their respective unit costs are i.i.d. sampled from some \emph{unknown} distributions.
\item \textbf{The secretary model:} An adversary gets to decide on the numbers of tasks that users can complete and the respective unit costs, but not on the \emph{order} in which they are presented to the crowdsourcer.
\end{itemize*}
In fact, the i.i.d. model is a special case of the secretary model, since the sequence can be determined by first picking a multi-set of task numbers and costs from the (unknown) distribution, and then permuting them randomly.

\subsection{Online Auction}
\label{subsec:onlineAuction}
We model the interactive process between the crowdsourcer and users as an \emph{online auction}.
Users are assumed to be game-theoretic and seek to make \emph{strategy} to maximize their individual utility in equilibrium.
Each user $i$ makes a \emph{reserve price} $b_i$, called \emph{bid}, based on its valuations on each single task.
Note that the arrival/departure time and cost of user $i$ are private and only known to itself.
Only the number of tasks $\tau_i$ must be true since the crowdsourcer can identify whether the announced tasks are performed.
In other words, user $i$ may misreport all information about its profile except for $\tau_i$.
The crowdsourcer's objective and the deadline $T$ are common knowledge.
Although we do not require a user to declare its departure time until the moment of its departure, we find it convenient to analyze our auctions as direct-revelation mechanisms (DRMs).
The strategy space in an online DRM allows a user to declare some possibly untruthful profile $\hat{\theta_i}=(\hat{a_i},\hat{d_i},\tau_i,b_i)$, subject to $a_i \leq \hat{a_i} \leq \hat{d_i} \leq d_i$.
Note that we assume that a user cannot announce an earlier arrival time or a later departure time than its true arrival/departure time.
In order to complete the required sensing tasks, the crowdsourcer needs to design an \emph{online mechanism} $\mathcal{M}=(f,p)$ consisting of an \emph{allocation} function $f$ and a \emph{payment} function $p$.
For any \emph{strategy sequence} $\hat{\theta}=(\hat{\theta_1},\ldots,\hat{\theta_n})$, the allocation function $f(\hat{\theta})$ computes an allocation of tasks for a selected subset of users $\mathcal{S}\subseteq \mathcal{U}$, and the payment function $p(\hat{\theta})$ returns a vector $(p_1(\hat{\theta}),\ldots,p_n(\hat{\theta}))$ of payments per task to the users.
That is, each selected user $i\in \mathcal{S}$ is allocated $f_i$ tasks at price $p_i$ per task. Thus, in the homogeneous user model, the \emph{utility} of user $i$ is
\[
u_i=
\begin{cases}p_i-c_i,\quad \ \ & \mbox{if}\quad i\in \mathcal{S} ;\\
  0, \quad \ \ & \mbox{otherwise}.
\end{cases}
\]
In the heterogeneous user model, the \emph{utility} of user $i$ is
\[
u_i=
\begin{cases}f_i (p_i-c_i),\quad \ \ & \mbox{if}\quad i\in \mathcal{S} ;\\
  0, \quad \ \ & \mbox{otherwise}.
\end{cases}
\]

Note that, the crowdsourcer, when presented with the strategy $\hat{\theta_i}$ of user $i$, must decide whether to accept user $i$ at what price ($p_i$) before the time step $\hat{d_i}$.
In the homogeneous user model, the crowdsourcer expects that the total payment is minimized under the condition that at least $L$ users are selected before the deadline, i.e.,
\begin{equation}\textbf{Minimize } \sum_{i\in \mathcal{S}}p_i \textbf{ subject to } |\mathcal{S}| = L.\nonumber\end{equation}
In the heterogeneous user model, the crowdsourcer expects that the total payment is minimized under the condition that at least $L$ tasks are completed before the deadline, i.e.,
\begin{equation}\textbf{Minimize } \sum_{i\in \mathcal{S}}f_i p_i \textbf{ subject to } \sum_{i\in \mathcal{S}}f_i = L.\nonumber\end{equation}

Table \ref{table_notations} lists frequently used notations in this paper.
\begin{table}[t]\small
\begin{center}
\caption{Frequently used notations.}
\label{table_notations}
\begin{tabularx}{0.48\textwidth}{c|XcX}
  \hline
  Notation & Description\\
  \hline
  $\mathcal{U},n,i$ & set of users, number of users, and one user\\
  $L,L'$ & total number of tasks and stage-task-number\\
  $T,T',t$ & deadline, end time step of each stage, and each time step\\
  $a_i,\hat{a_i}$ & true arrival time and strategic arrival time of user $i$\\
  $d_i,\hat{d_i}$ & true departure time and strategic departure time of user $i$\\
  $\tau_i$ & the number of tasks that user $i$ can complete\\
  $c_i,b_i$ & true cost and bid of user $i$ for performing a single task\\
  $\theta_i,\hat{\theta_i}$ & true profile and strategy of user $i$\\
  $\mathcal{S},\mathcal{S}'$ & set of selected users and sample set\\
  $f_i,p_i,u_i$ & allocation, payment per task, and utility of user $i$\\
  $b^*$ & bid threshold\\
  $\delta$ & parameter used for computing the bid threshold\\
  $\beta$ & initialized bid threshold\\
  \hline
\end{tabularx}
\end{center}
\end{table}

\subsection{Design Objective}
\label{subsec:objective}
Our objective is to design online mechanisms satisfying the following five desirable properties:
\begin{itemize*}
\item \textbf{Computational Efficiency:} A mechanism is \emph{computationally efficient} if both the allocation and payment can be computed in polynomial time as each user arrives.
\item \textbf{Individual Rationality:} Each participating user will have a non-negative utility: $u_i\geq 0$.
\item \textbf{Truthfulness:} A mechanism is \emph{cost-} and \emph{time-truthful} (or simply called \emph{truthful}, or \emph{incentive compatible} or \emph{strategyproof}) if reporting the true cost and arrival/departure time is a \emph{dominant strategy} for all users. In other words, no user can improve its utility by submitting a false cost or arrival/departure time, no matter what others submit.
\item \textbf{Consumer Sovereignty:} The mechanism cannot arbitrarily exclude a user; each user will have a chance to win the auction and obtain a payment if only its bid is sufficiently low while others are fixed.
\item \textbf{Constant Frugality:} The ideal goal of the mechanism is to minimize the total payment of the crowdsourcer.
Although it is impossible to obtain an optimal solution in the online scenario, we hope that the mechanism can use as small payment as possible for performing all tasks, namely achieving the \emph{frugality}.
Ideally, we would like that the total payment made by the mechanism has a constant approximate ratio compared to the minimal cost required for performing all tasks in the offline scenario where the crowdsourcer has full knowledge about users' profiles.
We call this ideal objective as ``\emph{idealistic frugality}".
However, it is easy to know that no truthful mechanism can perform well even if in the offline scenario\footnotemark[1].
\footnotetext[1]{Consider an instance where a single user can complete all required tasks at a very low cost $\epsilon$, while every other user has a very high cost $C$.
It is not hard to know that any truthful mechanism that completes all tasks must pay at least $C$, thus making the ratio between the total payment and the minimal cost unbounded.}
Therefore, in this paper we use an alternative objective, called as ``\emph{realistic frugality}".
Specially, we use the \emph{frugality ratio} to measure the frugality as follows: given a number of tasks to perform, $L$, we say that a mechanism is $\alpha$-\emph{frugal} if it allocates $L$ tasks in expectation while guaranteeing that the total payment is no more than the minimum cost required to complete $\alpha L$ tasks in the offline scenario.
Here, our goal is to design mechanisms with a constant frugality ratio, namely that the mechanisms can satisfy the \emph{constant frugality}.
\end{itemize*}

The importance of the first two properties is obvious, because they together guarantee that the mechanisms can be implemented in real time and satisfy the basic requirements.
In addition, the last three properties are indispensable for guaranteeing the high performance and robustness.
The \emph{truthfulness} aims at eliminating the fear of market manipulation and the overhead of strategizing over others for the participating users.
The \emph{consumer sovereignty} aims at guaranteeing that each participating user has a chance to win the auction, otherwise it will hinder the users' competition or even result in task starvation.
Besides, if some users are guaranteed not to win the auction, then being truthful or not will have the same outcome.
For this reason, the property satisfying both the \emph{consumer sovereignty} and \emph{truthfulness} is also called \emph{strong truthfulness} \cite{hajiaghayi2004adaptive}.
Later we will show that satisfying the \emph{consumer sovereignty} is not trivial in the online scenario, which is in contrast to the offline scenario.
Finally, we emphasize that the \emph{frugality} is different from \emph{social efficiency} which aims at minimizing the \emph{total cost} (\emph{not total payment}) of selected users. It is generally defined that, an online mechanism is $O(g(n))$-\emph{competitive} for \emph{social efficiency} if the ratio between the total cost achieved by this mechanism and the optimal offline solution is $O(g(n))$.

\section{Online Mechanism under Zero Arrival-departure Interval Model}
\label{sec:special case}
In this section, we consider the zero arrival-departure interval model where each user is impatient since the decision must be made immediately once it arrives.
Note that achieving time-truthfulness is trivial in this model.
It is because that any user has no incentive to report a later arrival time or an earlier departure time than its true arrival/departure time, since the user cannot perform any sensing task or obtain a payment after it departs.
To facilitate understanding, it is also assumed that no two users have the same arrival time.
Note that this assumption can be easily removed according to the revised mechanism in Section \ref{sec:general case}.

\subsection{Homogeneous User Mechanism}
An online mechanism needs to overcome several nontrivial challenges: first, the users' costs are unknown and need to be reported in a truthful manner; second, a specified number of tasks should be completed before the deadline; finally, the mechanism needs to cope with the online arrival of users.
In the homogeneous user model, if we do not consider the truthfulness, then the mechanism design problem becomes a $k$-choice secretary problem in essence.
There are two existing solutions \cite{babaioff2007knapsack,kleinberg2005multiple} that can achieve a constant-factor approximation, and can be applied into online auctions that satisfy the truthfulness.
However, none of two solutions can guarantee the \emph{consumer sovereignty}.
The first solution \cite{babaioff2007knapsack} adopts a two-stage sampling-accepting process: the first batch of users is rejected and used as the sample which enables making an informed decision on whether accepting the rest of users.
In this case, the first batch of users has no chance to win the auction no matter how low its cost is.
It can lead to undesirable effects in our problem: automatically rejecting the first batch of users encourages users to arrive late; in other words, those users arriving early have no incentive to report their bids, which may hinder the users' competition or even result in task starvation.
Although the second solution \cite{kleinberg2005multiple} adopts a multi-stage sampling-accepting process, at the first stage it uses Dynkin's algorithm \cite{dynkin1963optimum} for the classic secretary problem.
Therefore, this solution cannot guarantee the consumer sovereignty, as it is known that Dynkin's algorithm adopts a two-stage sampling-accepting process.

\subsubsection{Mechanism Design}
To address the above challenges, we design an online mechanism, \emph{Homo-OMZ}, by using a \emph{multiple-stage sampling-accepting} process.
The mechanism dynamically increases the sample size and learns a \emph{bid threshold} used for future decision, while increasing the \emph{stage-task-number} it allocates at various stages.
The whole process is illustrated in Algorithm \ref{alg:Homo-OMZ}.
We first divide all of $T$ time steps into $(\lfloor \log_2 T \rfloor+1)$ stages: $\{1, 2, \ldots, \lfloor \log_2 T \rfloor, \lfloor \log_2 T \rfloor+1\}$.
The stage $i$ ends at time step $T'=\lfloor 2^{i-1}T / 2^{\lfloor \log_2 T \rfloor} \rfloor$.
Correspondingly, the stage-task-number for the $i$-th stage is set to be $L'=2^{i-1}L / 2^{\lfloor \log_2 T \rfloor}$, meaning that $L'$ tasks should be allocated before the end time of this stage.
Finally, $L$ tasks in total should be allocated before the deadline $T$.
Fig. \ref{fig-stages} is an illustration when $T=8$.
When a stage is over, we add all users who have arrived into the sample set $\mathcal{S}'$, and compute a bid threshold $b^*$ according to the information of samples and the allocated stage-task-number $L'$.
This bid threshold is computed by calling the \textbf{GetBidThreshold1} algorithm (to be elaborated later), and used for making decision at the next stage.
Specially, when the last stage $i = \lfloor \log_2 T \rfloor+1$ comes, the bid threshold has been computed according to the information of all users arriving before time step $\lfloor T/2 \rfloor$, and the allocated stage-task-number $L/2$.
\begin{figure}[!t]
\centering{
\includegraphics[width=3.2in]{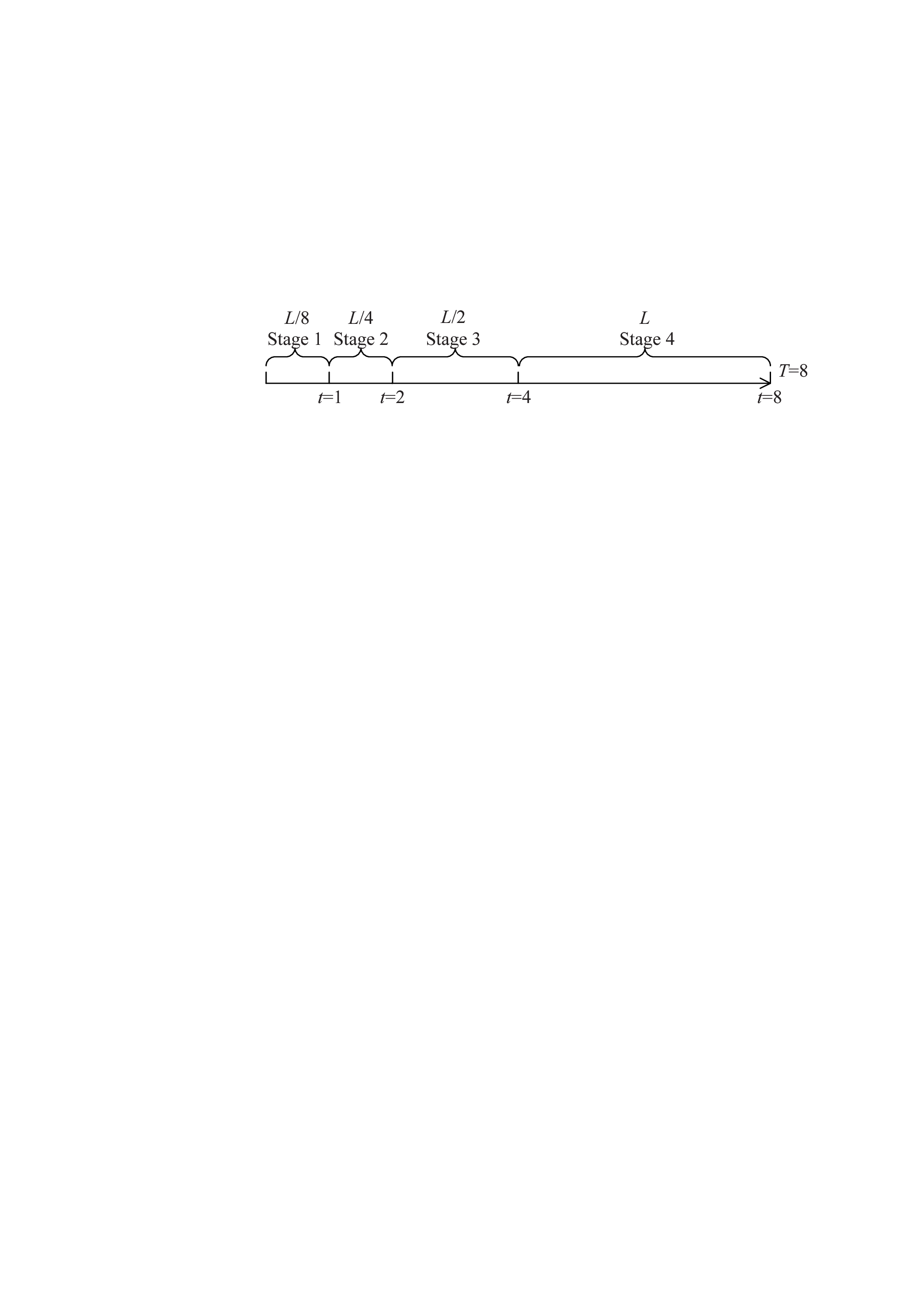}}
\caption{Illustration of a multiple-stage sampling-accepting process when $T=8$.}
\label{fig-stages}
\vspace{-5pt}
\end{figure}

When a new user $i$ arrives, the mechanism allocates a task to it as long as its bid is not less than the current bid threshold $b^*$, and the allocated stage-task-number $L'$ has not been achieved.
Meanwhile, we give user $i$ a payment:
\[p_i=b^*,\]
and add this user to the set of selected users $\mathcal{S}$.
To start the mechanism, we initially set a large bid threshold $\beta$, which is used for making decision at the first stage.

Since each stage maintains a common bid threshold, it is natural to compute the lowest single price that many users will accept from the sample set $\mathcal{S}'$ and the allocated stage-task-number $L'$ as the bid threshold.
Similar to the solution of \cite{kleinberg2005multiple}, we set the bid threshold equal to the $L'$-th lowest bid in the sample set, as illustrated in Algorithm \ref{alg:get threshold1}.

\subsubsection{Mechanism Analysis}
In the following we analyze the properties of the \emph{Homo-OMZ} mechanism from five aspects.

\begin{algorithm}
\SetAlgoLined
\caption{Homogeneous Online Mechanism under Zero Arrival-departure Interval Model (Homo-OMZ)}
\label{alg:Homo-OMZ}
\KwIn{Task number $L$, deadline $T$}
$(t,T',L',\mathcal{S}',b^*,\mathcal{S}) \leftarrow ( 1, \frac{T}{2^{\lfloor \log_2 T \rfloor}}, \frac{L}{2^{\lfloor \log_2 T \rfloor}}, \emptyset, \beta, \emptyset)$\;
\While{$t\leq T$}
{
    \If{there is a user $i$ arriving at time step $t$}
    {
        \eIf{$b_i\leq b^*$ {\bf and} $|\mathcal{S}| < L'$}
        {
            $f_i\leftarrow 1$; $p_i\leftarrow b^*$\;
            $\mathcal{S}\leftarrow \mathcal{S} \cup \{i\}$\;
        }{
            $f_i\leftarrow 0$; $p_i\leftarrow 0$\;
        }
        $\mathcal{S}'\leftarrow \mathcal{S}' \cup \{i\}$\;
    }
    \If{$t=\lfloor T' \rfloor$}
    {
        $b^*\leftarrow \textbf{GetBidThreshold1}(L',\mathcal{S}')$\;
        $T'\leftarrow 2T'$; $L'\leftarrow 2L'$\;
    }
    $t\leftarrow t+1$\;
}
\end{algorithm}

\begin{algorithm}
\SetAlgoLined
\caption{GetBidThreshold1}
\label{alg:get threshold1}
\KwIn{Stage-task-number $L'$, sample set $\mathcal{S}'$}
Sort $\mathcal{S}'$ s.t. $b_1 \leq b_2 \leq \ldots \leq b_{|\mathcal{S}'|}$\;
\Return $b_{L'}$;
\end{algorithm}

\begin{lemma}
\label{lemma:computational efficiency}
The Homo-OMZ mechanism is computationally efficient.
\end{lemma}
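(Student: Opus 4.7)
The plan is to walk through Algorithm \ref{alg:Homo-OMZ} line by line and bound the work performed when each new user arrives, then verify that the only potentially expensive subroutine (\textbf{GetBidThreshold1}) also runs in polynomial time. By the definition of computational efficiency given in Section \ref{subsec:objective}, this is all that is required.

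First I would observe that the per-arrival decision logic is trivial: when user $i$ arrives at time step $t$, the mechanism performs a single comparison of $b_i$ with the current threshold $b^*$, a single comparison of $|\mathcal{S}|$ with $L'$, at most one assignment of $f_i$ and $p_i$, and one insertion into $\mathcal{S}'$ (and possibly $\mathcal{S}$). Each of these is $O(1)$ (or $O(\log n)$ if $\mathcal{S}'$ is maintained as a balanced search tree), so the allocation and payment decisions for user $i$ are computed in polynomial time, as required by the definition.

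The only nontrivial cost occurs at the end of a stage, when the mechanism invokes \textbf{GetBidThreshold1}$(L',\mathcal{S}')$. I would bound this by noting that Algorithm \ref{alg:get threshold1} sorts the sample set and returns the $L'$-th smallest bid, which takes $O(|\mathcal{S}'| \log |\mathcal{S}'|) = O(n \log n)$ time (and in fact could be reduced to $O(n)$ via a linear-time selection procedure, but this refinement is not needed for the statement). Since there are only $\lfloor \log_2 T \rfloor + 1$ stages, the total overhead spent on threshold updates across the whole execution is $O(n \log n \cdot \log T)$, which is polynomial in the input size.

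Putting the two observations together, every action taken by the mechanism, both the irrevocable per-arrival decision and the periodic threshold recomputation, runs in polynomial time; hence \emph{Homo-OMZ} is computationally efficient. I do not anticipate any real obstacle here, since the mechanism is expressed as a direct imperative procedure with only elementary operations; the only subtlety worth flagging is that the threshold update happens between arrivals, so it does not delay any individual user's allocation/payment decision beyond the polynomial per-arrival bound above.
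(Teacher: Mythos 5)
Your argument is correct and matches the paper's own proof: both bound the per-arrival allocation/payment logic by $O(1)$ and observe that the cost is dominated by the sorting step in \textbf{GetBidThreshold1}, which runs in $O(n\log n)$ time. Your extra accounting over the $\lfloor \log_2 T\rfloor+1$ stages is a harmless refinement of the same idea.
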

\begin{proof}
Since the mechanism runs online, we only need to focus on the computation complexity at each time step $t\in\{1,\ldots,T\}$.
It is easy to know that the running time of computing the allocation and payment of each user $i$ (lines 3-11) is $O(1)$.
Thus, the computation complexity of the mechanism is dominated by the complexity of computing the bid threshold (Algorithm \ref{alg:get threshold1}).
We can use any computationally efficient sorting algorithm to implement the Algorithm \ref{alg:get threshold1}.
For example, the computation complexity of quick sorting algorithm is $O(n\log n)$.
Thus, the Homo-OMZ mechanism is computationally efficient.
\end{proof}

\begin{lemma}
\label{lemma:individual rationality}
The Homo-OMZ mechanism is individually rational.
\end{lemma}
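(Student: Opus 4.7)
The plan is to argue directly from the definition of utility in the homogeneous user model and the selection rule of \emph{Homo-OMZ}. I would split on whether user $i$ is selected or not: in the non-selected case the utility is trivially zero, and in the selected case I would appeal to the fact that the allocation step only triggers when $b_i \le b^*$, while the payment is set to $p_i = b^*$.

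First, I would recall that in the homogeneous model, $u_i = p_i - c_i$ when $i \in \mathcal{S}$ and $u_i = 0$ otherwise. If $i \notin \mathcal{S}$, then $u_i = 0 \ge 0$ and there is nothing to prove. Otherwise, by inspecting lines 4--7 of Algorithm \ref{alg:Homo-OMZ}, the mechanism accepted user $i$ only because $b_i \le b^*$, and then set $p_i = b^*$. Since individual rationality is understood with respect to truthful reporting of the private cost (i.e.\ $b_i = c_i$), this yields $p_i = b^* \ge b_i = c_i$, hence $u_i = p_i - c_i \ge 0$.

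I would then note that this conclusion does not depend on which stage user $i$ arrived in, nor on the particular value of $b^*$ at that stage (including the initial threshold $\beta$), since the inequality $b_i \le b^*$ is enforced uniformly by the acceptance condition. This makes the argument a one-line case analysis rather than anything that interacts with the multi-stage structure. There is no real obstacle here; the only subtlety worth flagging is that the lemma is about truthful bidders, and the later truthfulness lemma will justify that bidding $b_i = c_i$ is indeed the rational choice, so the two lemmas together give the standard interpretation of individual rationality.
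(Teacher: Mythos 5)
Your proof is correct and follows essentially the same argument as the paper: the acceptance condition $b_i \le b^*$ together with $p_i = b^*$ gives $p_i \ge b_i$ for selected users (with $b_i = c_i$ under truthful bidding), and $u_i = 0$ otherwise. Your explicit remark that individual rationality is interpreted with respect to truthful reporting is a minor clarification the paper leaves implicit, but the substance is identical.
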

\begin{proof}
From the lines 4-9 of Algorithm \ref{alg:Homo-OMZ}, we know that $p_i\geq b_i$ if $i\in \mathcal{S}$, otherwise $p_i=0$. Thus, we have $u_i\geq 0$.
\end{proof}

Designing a cost-truthful mechanism relies on the rationale of \emph{bid-independence}.
Let $b_{-i}$ denote the sequence of bids arriving before the $i$-th bid $b_i$, i.e., $b_{-i}=(b_1,\ldots,b_{i-1})$.
We call such a sequence \emph{prefixal}.
Let $p'$ be a function from prefixal sequences to prices (non-negative real numbers).
We extend the definition of bid-independence \cite{goldberg2006competitive} to the online scenario:

\begin{definition}[Bid-independent Online Auction]
An online auction is called bid-independent if the allocation and payment rules for each player $i$ satisfy:
\begin{enumerate}[a)]\setlength{\itemindent}{1.1em}
    \item The auction constructs a price schedule $p'(b_{-i})$;
    \item If $p'(b_{-i})\geq b_i$, player $i$ wins at price $p_i=p'(b_{-i})$;
    \item Otherwise, player $i$ is rejected, and $p_i=0$.
\end{enumerate}
\end{definition}
\begin{proposition}
(\cite{bar2002incentive}, Proposition 2.1) An online auction is cost-truthful if and only if it is bid-independent.
\end{proposition}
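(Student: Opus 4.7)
The plan is to prove the two implications separately. The easy direction (bid-independence $\Rightarrow$ cost-truthfulness) is a direct case analysis; the harder direction (cost-truthfulness $\Rightarrow$ bid-independence) is a Myerson-style characterization of the allocation and payment rules.

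For the easy direction, I would fix a user $i$ with true cost $c_i$ and an arbitrary prefixal sequence $b_{-i}$, then exploit the crucial fact that $p'(b_{-i})$ does not depend on $b_i$. Split into two cases. If $p'(b_{-i}) \geq c_i$, then truthful bidding $b_i = c_i$ wins at price $p'(b_{-i})$ and yields utility $p'(b_{-i}) - c_i \geq 0$; any deviation $b_i' \leq p'(b_{-i})$ produces the same utility, and any $b_i' > p'(b_{-i})$ produces utility $0$, which is not better. If instead $p'(b_{-i}) < c_i$, truthful bidding loses and yields utility $0$, while any winning deviation $b_i' \leq p'(b_{-i})$ yields the strictly negative utility $p'(b_{-i}) - c_i$. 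Either way, truth-telling weakly dominates, so the auction is cost-truthful.

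For the converse, I would fix the prefix $b_{-i}$ and study, as a function of $b_i$, whether $i$ wins and the price $q(b_i)$ paid when winning. The plan is to establish two facts: monotonicity of the winning set (it is downward-closed in $b_i$) and constancy of the payment on this set. Monotonicity follows from a single truthfulness inequality: if $i$ wins at $b$ but not at some $b' < b$, then a bidder with true cost $b'$ strictly improves by misreporting $b$, contradicting cost-truthfulness. The constancy claim is the standard two-cost swap: for any two winning bids $b,b'$ with payments $q(b), q(b')$, apply truthfulness once at true cost $b$ (deviating to $b'$) and once at true cost $b'$ (deviating to $b$) to pinch $q(b) = q(b')$. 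Defining $p'(b_{-i})$ as the supremum of the winning set then yields the price schedule required by conditions (a)--(c) of the bid-independence definition.

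I expect the payment-constancy step, together with the boundary handling at the threshold, to be the main obstacle. One must confirm that bids exactly equal to $p'(b_{-i})$ win (rather than lose) in order to match condition (b), which requires a careful case split at the supremum and an application of the truthfulness inequality on both sides of the threshold. This is a classical exercise in single-parameter mechanism design and does not invoke any online-specific machinery beyond the structural observation that $p'$ depends only on the prefix $b_{-i}$ and therefore is insensitive to $b_i$.
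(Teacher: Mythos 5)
The paper never proves this proposition --- it is imported verbatim from \cite{bar2002incentive} --- so there is no in-paper argument to compare against; what you have written is essentially the standard proof from that reference. Your forward direction (bid-independence implies cost-truthfulness) is the same fixed-threshold case analysis the authors later redo concretely in Lemma \ref{lemma:cost-truthfulness} for Homo-OMZ, and it is correct. Your converse is the usual Myerson/Goldberg-style characterization and it goes through, with two points to tighten. First, in the monotonicity step, the deviation of a bidder with true cost $b'<b$ to the winning bid $b$ is only \emph{strictly} profitable because $q(b)\ge b$; that inequality is not free, but follows from truthfulness at true cost $b$ against the losing deviation $b'$, so it should be stated explicitly (and note it presupposes that some losing bid exists --- otherwise downward-closure is vacuous). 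Second, the threshold tie cannot always be resolved the way you hope: an auction that accepts exactly when $b_i<p'(b_{-i})$ and pays $p'(b_{-i})$ is cost-truthful yet rejects a bid equal to the threshold, so the equivalence with condition (b) as literally written holds only up to the tie-breaking convention at $b_i=p'(b_{-i})$; the cited reference glosses this boundary in the same way, so flagging it is enough. Finally, your argument is for the single-unit utility $p_i-c_i$; in the heterogeneous model of this paper, where utility is $f_i(p_i-c_i)$, bid-independence of the price alone does not suffice unless the allocated quantity $f_i$ of a winning user is also not influenced by its bid, which is exactly how the mechanisms here use the proposition.
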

\begin{lemma}
\label{lemma:cost-truthfulness}
The Homo-OMZ mechanism is cost-truthful.
\end{lemma}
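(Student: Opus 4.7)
The plan is to appeal directly to Proposition~2.1 from \cite{bar2002incentive}, cited just above the statement: it suffices to exhibit a bid-independent price schedule $p'(b_{-i})$ for each user $i$ that generates the same allocation and payment as \emph{Homo-OMZ}. Once bid-independence is established, cost-truthfulness follows immediately.

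First I would pin down the threshold used for user $i$. When user $i$ arrives at time step $a_i$, the value of $b^*$ used in line~4 of Algorithm~\ref{alg:Homo-OMZ} is either the initial value $\beta$ (if $i$ falls in the first stage) or the output of \textbf{GetBidThreshold1} invoked the last time the condition $t=\lfloor T'\rfloor$ was triggered before $a_i$. Crucially, that invocation uses the sample set $\mathcal{S}'$ built from users whose arrival time is strictly less than the start of the current stage, hence strictly less than $a_i$. Therefore $b^*$ is a function of $b_{-i}=(b_1,\ldots,b_{i-1})$ alone and does not depend on $b_i$. Similarly, the counter $|\mathcal{S}|$ checked in line~4 reflects only allocations made to users arriving before $i$, so whether the stage-task-number $L'$ has been exhausted is also determined by $b_{-i}$.

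Next I would package these two pieces into the required price schedule. Define
\[
p'(b_{-i})=\begin{cases}b^*, & \text{if }|\mathcal{S}|<L'\text{ at the moment }i\text{ arrives},\\ 0, & \text{otherwise},\end{cases}
\]
where $b^*$ and $|\mathcal{S}|$ are the quantities described above, both computed purely from $b_{-i}$. Comparing with Algorithm~\ref{alg:Homo-OMZ}: user $i$ wins (i.e.\ $f_i=1$) exactly when $b_i\le b^*$ and capacity remains, which is the condition $p'(b_{-i})\ge b_i$ (noting that bids are in $\mathbb{R}^+$ so $p'(b_{-i})=0$ never accepts anyone); and in that case the payment $p_i=b^*=p'(b_{-i})$. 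When the winning condition fails, $p_i=0$. This matches the three clauses of the definition of a bid-independent online auction exactly.

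The only step requiring care is the argument that $b_i$ does not sneak into $b^*$ through the sample set. Although $i$ is inserted into $\mathcal{S}'$ in line~10, this insertion happens \emph{after} $i$'s own allocation and payment are decided, and it can only affect the threshold used for \emph{later} users. Once this temporal separation is made explicit, bid-independence and hence cost-truthfulness of \emph{Homo-OMZ} follow from Proposition~2.1. This is the main (and essentially only) subtle point; everything else is a direct reading of the pseudocode.
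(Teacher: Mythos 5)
Your proposal is correct and follows essentially the same route as the paper: the paper introduces bid-independence and Proposition 2.1 precisely for this lemma, and the crucial fact in both arguments is that the threshold $b^*$ and the remaining-capacity condition are determined entirely by users arriving before $i$ (user $i$ only enters $\mathcal{S}'$ after its own allocation and payment are fixed). The only difference is presentational: you invoke Proposition 2.1 as a black box after exhibiting the price schedule, whereas the paper spells out the dominant-strategy case analysis on $c_i$ versus $b^*$ directly.
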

\begin{proof}
Consider a user $i$ that arrives at some stage for which the bid threshold is $b^*$.
If by the time the user arrives there are no remaining tasks unallocated, then the user's cost declaration will not affect the allocation of the mechanism and thus cannot improve its utility by submitting a false cost.
Otherwise, assume there are remaining tasks to be allocated by the time when the user arrives.
In case $c_i \leq b^*$, reporting any cost below $b^*$ would not make a difference in the user's allocation and payment, and its utility would be $b^*-c_i\geq 0$.
Declaring a cost above $b^*$ would make the worker lose the auction, and its utility would be 0.
In case $c_i > b^*$, declaring any cost above $b^*$ would leave the user unallocated with utility 0.
If the user declares a cost lower than $b^*$ it will be allocated.
In such a case, however, its utility will be negative.
Hence the user's utility is always maximized by reporting its true cost: $b_i=c_i$.
\end{proof}

\begin{lemma}
\label{lemma:consumer sovereignty}
The Homo-OMZ mechanism satisfies the consumer sovereignty.
\end{lemma}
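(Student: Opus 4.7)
The plan is to exhibit, for every user $i$, a positive-probability realization of the arrival order under which $i$ is selected whenever $b_i$ is small enough. This would rely on two structural features of Algorithm \ref{alg:Homo-OMZ}: the initial bid threshold $\beta$ is chosen a priori to be large rather than computed from earlier bids, and the stage-1 quota $L' = L/2^{\lfloor \log_2 T \rfloor}$ is strictly positive whenever $L \geq 1$.

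First, I would fix an arbitrary user $i$ with all other users' profiles held fixed, and pick any bid $b_i \leq \beta$. Under the secretary model the arrival order is uniformly random, so with probability at least $1/n$ user $i$ arrives first, i.e.\ at time step $t = 1$. Next, I would check that at that moment Algorithm \ref{alg:Homo-OMZ} is still in stage $1$, with $b^* = \beta$ and quota $L' > 0$, and $|\mathcal{S}| = 0 < L'$. Both conditions in the allocation test therefore pass, so the algorithm sets $f_i = 1$ and $p_i = \beta$, meaning $i$ wins and receives a strictly positive payment. This gives $i$ a positive chance of winning, which is precisely the ``chance to win'' required by the definition of consumer sovereignty.

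I would then contrast this with the two-stage mechanism of \cite{babaioff2007knapsack} and the first-stage Dynkin subroutine used in \cite{kleinberg2005multiple}: in those schemes, the initial batch of users is rejected a priori with no dependence on their bids, so some users have zero winning probability regardless of how low they bid. The combination of the pre-set large $\beta$ with the multi-stage sampling-accepting structure of Homo-OMZ is exactly what restores the property here; no hidden use of later lemmas is required.

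The main technical obstacle I anticipate is only cosmetic: the degenerate regime in which $L$ is small compared to $2^{\lfloor \log_2 T \rfloor}$, so that $L'$ is a small positive fraction. The argument still goes through because the quota test $|\mathcal{S}| < L'$ passes at $|\mathcal{S}|=0$ as long as $L' > 0$, but a short sentence to confirm that the algorithm treats the fractional quota correctly, or equivalently to assume $L \geq 2^{\lfloor \log_2 T \rfloor}$ without loss of generality, should accompany the proof.
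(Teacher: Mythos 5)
Your argument proves a different (and strictly weaker) statement than the property the paper is after. You establish that, averaging over the random arrival order, user $i$ has ex-ante probability at least $1/n$ of landing in the first slot and winning at the initial threshold $\beta$. But consumer sovereignty as defined here (``each user will have a chance to win \ldots if only its bid is sufficiently low while others are fixed'') is meant to hold for each user at its actual arrival position, with the other users' profiles and the arrival realization held fixed; the arrival time is part of the user's profile, not something the analysis may re-randomize in the user's favor. The tell is your own contrast paragraph: under the two-stage mechanisms of \cite{babaioff2007knapsack} and the Dynkin first stage of \cite{kleinberg2005multiple}, any fixed user also has positive ex-ante probability (over the random order) of falling in the accepting phase and winning with a low bid, so your notion of ``a chance to win'' would certify exactly the mechanisms the paper cites as violating consumer sovereignty. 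The violation there is conditional: a user who in fact arrives during the sampling batch is rejected no matter how low it bids, and it is precisely this conditional exclusion that Homo-OMZ is designed to avoid. Your proof says nothing about a user whose actual arrival falls in stage $2,3,\ldots$, which is the whole content of the claim.

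The paper's argument is instead structural and arrival-time-uniform: in Homo-OMZ every stage is simultaneously a sampling stage and an accepting stage, so no user is automatically discarded as a ``sample''; whenever user $i$ arrives, the acceptance test depends on its own bid against the current threshold $b^*$ (which is bid-independent of $i$), and $i$ is selected and paid $b^*>0$ provided its bid is low enough and the current stage quota $L'$ is not yet exhausted. That is what ``cannot arbitrarily exclude a user'' means here. Your side remark about the fractional quota $L'=L/2^{\lfloor\log_2 T\rfloor}$ is harmless but also not the issue; to repair the proof you should drop the ``user arrives first'' event entirely and argue, for an arbitrary fixed arrival time of $i$, that the stage structure guarantees acceptance once $b_i$ is below the threshold in force at that moment and tasks remain in that stage.
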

\begin{proof}
Each stage is an accepting process as well as a sampling process ready for the next stage.
As a result, users are not automatically rejected during the sampling process, and are allocated as long as their bids are not less than the current bid threshold, and there are still remaining tasks unallocated at the current stage.
\end{proof}

According to the proof of Theorem 2.1 from \cite{kleinberg2005multiple}, it is easy to know that the expected total cost of the $L$ users selected by the \emph{Homo-OMZ} mechanism is at most $1+O(\sqrt{1/L})$ times the sum of the $L$ lowest costs of all users.
Thus, we obtain the following lemma.
\begin{lemma}
\label{lemma:consumer sovereignty}
The Homo-OMZ mechanism is $O$(1)-competitive for social efficiency.
\end{lemma}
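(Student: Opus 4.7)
The plan is to leverage the existing analysis of Kleinberg's multiple-choice secretary algorithm, since Homo-OMZ is essentially a cost-minimization instance of it. First I would invoke Lemma \ref{lemma:cost-truthfulness} to conclude that every participating user bids its true cost, so $b_i = c_i$. This makes the sum of the $L$ lowest bids coincide with the sum of the $L$ lowest costs among all arriving users, which in the offline setting is precisely the minimum total cost for completing $L$ tasks in the homogeneous model and therefore the benchmark against which competitiveness is measured.

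Second, I would argue that the multiple-stage sampling-accepting structure of Homo-OMZ matches the scheme in \cite{kleinberg2005multiple}: the horizon is partitioned into geometrically growing stages, at the end of each stage the $L'$-th order statistic of the bids observed so far (via Algorithm \ref{alg:get threshold1}) serves as the acceptance threshold for the next stage, and the quota $L'$ is doubled alongside the stage length. Under the secretary (uniformly random arrival order) assumption, the users observed up to any stage boundary constitute a uniformly random subset of the eventual population of the appropriate size, which matches the distributional hypothesis required by Kleinberg's analysis.

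Third, I would appeal to Theorem 2.1 of \cite{kleinberg2005multiple} in its cost-minimization form, which states that the expected total bid of the $L$ accepted users is bounded by $\bigl(1+O(\sqrt{1/L})\bigr)$ times the sum of the $L$ smallest bids. Combining with truthfulness yields
\begin{equation}
\mathbb{E}\!\left[\sum_{i\in \mathcal{S}} c_i\right] \leq \bigl(1+O(\sqrt{1/L})\bigr)\cdot \mathrm{OPT},\nonumber
\end{equation}
where $\mathrm{OPT}$ denotes the minimum total cost for completing $L$ tasks in the offline scenario. Since $1+O(\sqrt{1/L})$ is uniformly bounded for $L\geq 1$, the competitive ratio is $O(1)$.

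The main obstacle is notational rather than substantive: Kleinberg's theorem is stated as a value-maximization secretary problem, whereas here we minimize cost and set the acceptance threshold to the $L'$-th \emph{lowest} sample rather than the $L'$-th highest. I would need to verify that the concentration of the threshold around the $L'$-th order statistic of the full population, the feasibility argument showing the quota is met with high probability, and the bound on the expected slack all go through symmetrically when one flips the direction of the objective. This translation is routine and does not perturb the $\sqrt{1/L}$ error term, yielding the claimed $O(1)$-competitiveness.
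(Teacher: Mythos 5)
Your proposal follows essentially the same route as the paper, which likewise justifies the lemma by appealing to Theorem 2.1 of \cite{kleinberg2005multiple}: the expected total cost of the $L$ selected users is at most $1+O(\sqrt{1/L})$ times the sum of the $L$ lowest costs, a constant factor. Your additional remarks on cost-truthfulness (Lemma \ref{lemma:cost-truthfulness}) and on flipping the maximization argument to cost minimization only make explicit what the paper leaves implicit, so the argument is correct and matches the paper's.
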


The above five lemmas together prove the following theorem.
\begin{theorem}
The Homo-OMZ mechanism satisfies the computational efficiency, individual rationality, truthfulness, consumer sovereignty, and constant competitiveness for social efficiency under the homogeneous user model and zero arrival-departure interval model.
\end{theorem}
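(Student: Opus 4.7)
The plan is to assemble the theorem as a direct corollary of the five lemmas just established, together with the remark made at the start of Section~\ref{sec:special case} about time-truthfulness being automatic in the zero arrival-departure interval model. There is essentially nothing new to prove here; the task is to check that the five enumerated properties in the theorem statement each correspond to (or are immediately implied by) something already proved, and to make explicit the one property whose proof was deferred to the introductory discussion rather than stated as a numbered lemma.

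Concretely, I would proceed property by property. \emph{Computational efficiency} is Lemma~\ref{lemma:computational efficiency}. \emph{Individual rationality} is Lemma~\ref{lemma:individual rationality}. For \emph{truthfulness} I would split into two halves: \emph{cost-truthfulness} is exactly Lemma~\ref{lemma:cost-truthfulness}, while \emph{time-truthfulness} I would justify in one sentence by repeating the observation at the top of the section: under the zero arrival-departure interval model each user has $a_i=d_i$ and must be handled at its arrival time, and since $a_i \leq \hat{a_i} \leq \hat{d_i} \leq d_i$ is the only feasible misreport, no temporal deviation is available. \emph{Consumer sovereignty} is Lemma~\ref{lemma:consumer sovereignty} (the one stating this property), which argues that every arriving user can still win at some stage if its bid falls below the current threshold and capacity remains. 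Finally, \emph{constant competitiveness for social efficiency} is the second Lemma~\ref{lemma:consumer sovereignty} (the $O(1)$ statement), which in turn invokes the bound from Theorem~2.1 of \cite{kleinberg2005multiple} on the multiple-stage sampling process.

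I do not anticipate any real obstacle. The only mild bookkeeping issue is that the excerpt labels two consecutive lemmas with the same \verb|\label| tag (both as \texttt{lemma:consumer sovereignty}), so when citing the social-efficiency bound I would either refer to it by phrase (\emph{the $O(1)$-competitiveness lemma}) or fix the label; in the proof itself I would cite the two lemmas unambiguously by their statements. With all ingredients in hand, the proof reduces to a single sentence: combining Lemmas~\ref{lemma:computational efficiency}--\ref{lemma:consumer sovereignty} with the triviality of time-truthfulness in the zero arrival-departure interval model yields the theorem.
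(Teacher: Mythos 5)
Your proposal matches the paper's argument: the theorem is stated there as an immediate consequence of the five preceding lemmas, with time-truthfulness handled by the remark at the start of the section that temporal misreports are unavailable when $a_i=d_i$. Your property-by-property assembly (and your note about the duplicated lemma label) is exactly the paper's route, so the proof is correct and essentially identical.
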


\subsection{Heterogeneous User Mechanism}
Although the \emph{Homo-OMZ} mechanism satisfies four desirable properties (computational efficiency, individual rationality, truthfulness, and consumer sovereignty), and achieves constant competitiveness for social efficiency, the failure of guaranteeing good frugality makes it less attractive.
To achieve good frugality, one possible direction is to make use of the off-the-shelf results on the budgeted feasible mechanisms \cite{singer2013pricing}.
The budgeted feasible mechanism design problem is similar to our frugal mechanism design problem, with the difference that the total payment paid to the winners is a constraint instead of an objective function.
To address this issue, it is intuitive that we can dynamically learn a budget that are enough for allocating users for completing a specific number of tasks, then use this budget to compute a bid threshold by using budget feasible mechanisms, and finally use this bid threshold for making further decisions.
In the following, we present a novel online mechanism \emph{Hetero-OMZ} that satisfies all five desirable properties including the frugality in the heterogeneous user model.
Note that since the homogeneous user model is a special case of the heterogeneous user model, the \emph{Hetero-OMZ} mechanism is a general solution for both two models.

\subsubsection{Mechanism Design}
As illustrated in Algorithm \ref{alg:Hetero-OMZ}, the basic framework of the \emph{Hetero-OMZ} mechanism is very similar to the \emph{Homo-OMZ} mechanism, with two differences: 1) each user may be allocated multiple tasks upper bounded by the number of tasks that it has announced, as long as its bid is not less than the current bid threshold $b^*$, and the allocated stage-task-number $L'$ has not been achieved;
2) it has a different approach to computing the bid threshold, as illustrated in Algorithm \ref{alg:get threshold2}.
\begin{algorithm}
\SetAlgoLined
\caption{Heterogeneous Online Mechanism under Zero Arrival-departure Interval Model (\emph{Hetero-OMZ})}
\label{alg:Hetero-OMZ}
\KwIn{Task number $L$, deadline $T$}
$(t,T',L',\mathcal{S}',b^*,\mathcal{S}) \leftarrow ( 1, \frac{T}{2^{\lfloor \log_2 T \rfloor}}, \frac{L}{2^{\lfloor \log_2 T \rfloor}}, \emptyset, \beta, \emptyset)$\;
\While{$t\leq T$}
{
    \If{there is a user $i$ arriving at time step $t$}
    {
        \eIf{$b_i\leq b^*$ {\bf and} $\sum_{j\in \mathcal{S}}f_j < L'$}
        {
            $f_i\leftarrow \min\{\tau_i,L'-\sum_{j\in \mathcal{S}}f_j\}$; $p_i\leftarrow b^*$\;
            $\mathcal{S}\leftarrow \mathcal{S} \cup \{i\}$\;
        }{
            $f_i\leftarrow 0$; $p_i\leftarrow 0$\;
        }
        $\mathcal{S}'\leftarrow \mathcal{S}' \cup \{i\}$\;
    }
    \If{$t=\lfloor T' \rfloor$}
    {
        $b^*\leftarrow \textbf{GetBidThreshold2}(L',\mathcal{S}')$\;
        $T'\leftarrow 2T'$; $L'\leftarrow 2L'$\;
    }
    $t\leftarrow t+1$\;
}
\end{algorithm}

\begin{algorithm}
\SetAlgoLined
\caption{GetBidThreshold2}
\label{alg:get threshold2}
\KwIn{Stage-task-number $L'$, sample set $\mathcal{S}'$}
$\mathcal{J}\leftarrow \emptyset$; $i \leftarrow \arg\min_{j\in \mathcal{S}'}b_j$\;
\While{$\sum_{j\in \mathcal{J}}f_j < \delta L'$}
{
    $f_i \leftarrow \min\{\tau_i,\delta L'-\sum_{j\in \mathcal{J}}f_j\}$\;
    $\mathcal{J}\leftarrow \mathcal{J}\cup \{i\}$\;
    $i \leftarrow \arg\min_{j\in {\mathcal{S}'\backslash \mathcal{J}}}b_j$\;
}
$B\leftarrow \sum_{j\in \mathcal{J}}f_j b_j$\;
$p\leftarrow \textbf{BudgetFeasibleMechanism}(B,\mathcal{S}')$\;
\Return $p$;
\end{algorithm}

To compute the bid threshold, we first find the minimal cost for performing $\delta L'$ tasks from the sample set $\mathcal{S}'$.
This can be done by using a simple greedy algorithm which sorts users according to their bids, and preferentially allocates tasks to users with lower bids until that all of $\delta L'$ tasks have been allocated (lines 1-6).
Here we set $\delta>1$ to obtain a slight overestimate of the required budget for allocating $L'$ tasks.
Intuitively, if the value of $\delta$ is too small, the budget will be not enough that the required number of tasks can be allocated at the next stage.
Conversely, if the value of $\delta$ is too large, the budget will be wasted, which will result in a bad frugality.
Later we will fix the value of $\delta$ elaborately to enable the mechanism achieving a constant frugality ratio.
Second, we use the total payment for performing $\delta L'$ tasks as the budget $B$ (line 7), and then compute a bid threshold from the sample set $\mathcal{S}'$ and the budget $B$ by using the budget feasible mechanism (line 8).

Note that the budget feasible mechanism is an offline procedure proposed by \cite{singer2013pricing}, which has access all bids.
As illustrated in Algorithm \ref{alg:budgetFeasible}, it also adopts a greedy strategy.
First, users are sorted according to their bids so that: $b_1\leq b_2 \leq \cdots b_{|\mathcal{S'}|}$.
When given a price at $p=b_i$ and the remaining budget $B -p\sum_{j\in \mathcal{J}}f_j$, the number of tasks that a user could be allocated at this price without exceeding the budget constraint is $f_i= \min\{\tau_i,\lfloor \frac{B}{p} \rfloor-\sum_{j\in \mathcal{J}}f_j\}$.
Then we find the largest $k$ such that $b_k\leq \frac{B}{\sum_{j=1}^{k-1}f_j+1}$.
The set of selected users is $\mathcal{J}=\{1,2,\ldots k\}$.
Finally, the bid threshold is set to be $b_k$.
This mechanism follows the \emph{proportional share allocation rule} in essence.
It has a good property that it sets a common price which on one hand is high enough so that enough users could accept and on the other hand is low enough so that the budget could be efficiently exploited.

\begin{algorithm}
\SetAlgoLined
\caption{BudgetFeasibleMechanism \cite{singer2013pricing}}
\label{alg:budgetFeasible}
\KwIn{Budget constraint $B$, sample set $\mathcal{S}'$}
$\mathcal{J}\leftarrow \emptyset$; $i \leftarrow \arg\min_{j\in \mathcal{S}'}b_j$\;
\While{$b_i \leq \frac{B}{\sum_{j\in \mathcal{J}}f_j+1}$}
{
    $p \leftarrow b_i$\;
    $f_i \leftarrow \min\{\tau_i,\lfloor \frac{B}{p} \rfloor-\sum_{j\in \mathcal{J}}f_j\}$\;
    $\mathcal{J}\leftarrow \mathcal{J}\cup \{i\}$\;
    $i \leftarrow \arg\min_{j\in {\mathcal{S}'\backslash \mathcal{J}}}b_j$\;
}
\Return $p$;
\end{algorithm}

In the following, we use an example to illustrate how the \emph{Hetero-OMZ} mechanism works.
\begin{example}
\label{exmaple1}
Consider a crowdsourcer with the number of tasks to perform $L=8$ and the deadline $T=8$.
There are five users arriving online before the deadline with the following profiles: $\theta_1=(1,1,4,2)$, $\theta_2=(2,2,4,4)$, $\theta_3=(4,4,4,5)$, $\theta_4=(6,6,4,1)$, and $\theta_5=(7,7,4,3)$.
\end{example}

We set $\beta=5$ and $\delta=2$. Then the \emph{Hetero-OMZ} mechanism works as follows.
\begin{itemize*}
    \item[$\diamond$] $t=1$: $(T',L',\mathcal{S}',b^*,\mathcal{S})=(1,1,\emptyset,5,\emptyset)$, $f_1=1$, $p_1=5$, $\mathcal{S}=\{1\}$, $\mathcal{S}'=\{1\}$. Update the bid threshold: $b^*=2$.
    \item [$\diamond$] $t=2$: $(T',L',\mathcal{S}',b^*,\mathcal{S})=(2,2,\{1\},2,\{1\})$, $f_2=0$, $p_2=0$, $\mathcal{S}=\{1\}$, $\mathcal{S}'=\{1,2\}$. Update the bid threshold: $b^*=2$.
    \item [$\diamond$] $t=4$: $(T',L',\mathcal{S}',b^*,\mathcal{S})=(4,4,\{1,2\},2,\{1\})$, $f_3=0$, $p_3=0$, $\mathcal{S}=\{1\}$, $\mathcal{S}'=\{1,2,3\}$. Update the bid threshold: $b^*=4$.
    \item [$\diamond$] $t=6$: $(T',L',\mathcal{S}',b^*,\mathcal{S})=(8,8,\{1,2,3\},4,\{1\})$, $f_4=4$, $p_4=4$, $\mathcal{S}=\{1,4\}$, $\mathcal{S}'=\{1,2,3,4\}$.
    \item [$\diamond$] $t\!=\!7\!$: $\!(T',L',\mathcal{S}',b^*,\mathcal{S})\!=\!(8,8,\{1,2,3,4\},4,\{1,4\})$, $f_5=3$, $p_5=4$, $\mathcal{S}=\{1,4,5\}$, $\mathcal{S}'=\{1,2,3,4,5\}$. Finally, the set of selected users is $\mathcal{S}=\{1,4,5\}$, and the payments to these selected 3 users are 5, 16, and 12 respectively.
\end{itemize*}

\subsubsection{Mechanism Analysis}
It can be proved that the \emph{Hetero-OMZ} mechanism also satisfies the computational efficiency, individual rationality, truthfulness, and consumer sovereignty, with very similar proof of the \emph{Homo-OMZ} mechanism.
In the following we prove that the \emph{Hetero-OMZ} mechanism can achieve a constant frugality ratio under both the i.i.d. model and the secretary model by elaborately fixing different value of $\delta$ in Algorithm \ref{alg:get threshold2}.

If the stage-task-number could be achieved at each stage, then $L$ tasks would be allocated finally.
Since our \emph{Hetero-OMZ} mechanism consists of multiple stages, and dynamically increases the stage-task-number, it only needs to prove that $L/2$ tasks could be allocated at the last stage while the total payment is no more than the budget $B$.
In this case, the frugality ratio would be $\delta$, since at the last stage the budget $B$ is the minimal cost for performing $\delta L'=\delta L/2$ tasks according to Algorithm \ref{alg:get threshold2}.
In order to facilitate analysis, we change the stage-task-number constraint into the budget constraint at each stage in Algorithm \ref{alg:Hetero-OMZ}.
Specially, in line 4 of Algorithm \ref{alg:Hetero-OMZ}, the condition $\sum_{j\in \mathcal{S}}f_j < L'$ is replaced with $\sum_{j\in \mathcal{S}}p_j f_j < B$,
and in line 5, the allocation $f_i\leftarrow \min\{\tau_i,L'-\sum_{j\in \mathcal{S}}f_j\}$ is replaced with $f_i\leftarrow \min\{\tau_i,\lfloor(B-\sum_{j\in \mathcal{S}}p_j f_j)/b^*\rfloor\}$.
Note that in this case, if we prove that at least $L/2$ tasks could be allocated at the last stage under the budget constraint $B$, then it is equivalent to that $L/2$ tasks could be allocated while the total payment is no more than $B$, meaning that the frugality ratio is $\delta$.
In the following analysis, we use the modified version of Algorithm \ref{alg:Hetero-OMZ}.

For ease of illustration, we first introduce more notations and concepts.
\begin{itemize*}
\item $\mathcal{Z}$ denotes the set of selected users computed by Algorithm \ref{alg:budgetFeasible} based on the set of users $\mathcal{U}$ arriving before $T$ and the budget $2B$. The bid threshold of $\mathcal{Z}$ is $p$.
\item $\mathcal{S}'$ denotes the sample set obtained when the stage $\lfloor \log_2 T \rfloor$ is over, which consists of all users arriving before the time $\lfloor T/2 \rfloor$.
\item $\mathcal{Z}_1$ and $\mathcal{Z}_2$ denote the subsets of $\mathcal{Z}$ that appears in the first and second half of the input stream, respectively. Thus we have $\mathcal{Z}_1=\mathcal{Z}\cap \mathcal{S}'$, and $\mathcal{Z}_2=\mathcal{Z}\cap \{\mathcal{U}\backslash \mathcal{S}'\}$.
\item $\mathcal{Z}_1'$ denotes the set of selected users computed by Algorithm \ref{alg:budgetFeasible} based on the sample set $\mathcal{S}'$ and the budget $B$.
The bid threshold of $\mathcal{Z}_1'$ is $p_1'$.
\item $\mathcal{Z}_2'$ denotes the set of selected users computed by Algorithm \ref{alg:Hetero-OMZ} at the last stage.
\item The total number of tasks allocated to a set of selected users $\mathcal{X}$ is denoted by a function $f(\mathcal{X})=\sum_{i\in \mathcal{X}}f_i$.
\end{itemize*}

Before analyzing the frugality formally, we introduce a lemma about Algorithm \ref{alg:budgetFeasible}.
\begin{lemma}
\label{lemma:WWW}
 (\cite{singer2013pricing}, Lemma 3.1) For a given sample set of users, let $L$ be the maximal number of tasks that can be allocated under a given budget. Then at least $L/2$ tasks can be allocated under budget at the price computed by Algorithm \ref{alg:budgetFeasible}.
\end{lemma}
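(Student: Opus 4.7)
The statement is exactly Lemma 3.1 of \cite{singer2013pricing} recast in our notation, so my plan is to mirror Singer's proportional-share exchange argument. Let $\mathcal{J}$ be the set output by Algorithm \ref{alg:budgetFeasible} with final uniform price $p = b_k$ (the bid of the last accepted user) and total allocation $f(\mathcal{J})$, and let $OPT$ denote an optimal allocation of $L$ tasks under budget $B$. The goal is to bound $f(\mathcal{J}) \ge L/2$ by splitting $OPT$ into an \emph{expensive} part (tasks performed by users with bid $> p$) and a \emph{cheap} part (tasks performed by users with bid $\le p$), and bounding each part separately in terms of $f(\mathcal{J})$.

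For the expensive part the argument is short and is driven by the algorithm's halting rule. Either $\mathcal{J} = \mathcal{S}'$, in which case the expensive part is empty and there is nothing to prove, or the next candidate $k+1$ exists and satisfies $b_{k+1} > B/(f(\mathcal{J})+1)$. Since bids are sorted, every user with bid $>p$ has bid at least $b_{k+1}$, so every expensive task in $OPT$ costs more than $B/(f(\mathcal{J})+1)$. Combining this lower bound per task with the budget constraint $\sum_{i \in OPT_{>p}} b_i x_i \le B$ yields $|OPT_{>p}| \le f(\mathcal{J}) + 1$.

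For the cheap part I would observe that, restricted to users with bid $\le p$, Algorithm \ref{alg:budgetFeasible} is precisely the cheapest-first greedy that fills each user to its capacity while respecting the running constraint $f(\mathcal{J}) \le \lfloor B/p' \rfloor$ at the current price $p'$. The key structural claim is that on this restricted instance greedy dominates any competing allocation: no matter how $OPT$ distributes its budget among users with bid $\le p$, each of those tasks costs at least $b_i$ per task, and the greedy packing of the cheapest users first (up to capacity) is optimal for maximizing the task count under that budget. Therefore $|OPT_{\le p}| \le f(\mathcal{J})$, possibly up to a floor-induced additive $O(1)$. Adding the two bounds gives $L \le 2 f(\mathcal{J}) + O(1)$, which under the convention that $L$ is integral and the mechanism allocates at least one task when $L\ge 1$ yields the desired $f(\mathcal{J}) \ge L/2$.

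The main obstacle will be formalizing the cheap-part inequality, since $OPT$ is free to use a non-uniform pricing and might appear to extract more tasks per dollar from very cheap, small-capacity users than the uniform price $p$ permits. The resolution is the cheapest-first structure of Algorithm \ref{alg:budgetFeasible}: on the subinstance of users with bid at most $p$ the algorithm allocates greedily in increasing-bid order and exhausts either user capacities or the uniform-price budget at $p$, so any feasible competitor on the same subinstance is dominated up to the floor rounding, which is absorbed by the final factor of two.
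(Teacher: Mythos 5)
You are reconstructing a result the paper itself does not prove (it is imported verbatim as Lemma 3.1 of \cite{singer2013pricing}), so the comparison is against the standard argument for Algorithm \ref{alg:budgetFeasible}, and there your plan has a genuine gap. The expensive part is fine: if a user $k+1$ is rejected then $b_{k+1}>B/(f(\mathcal{J})+1)$, and budget feasibility of the optimum gives at most $f(\mathcal{J})$ such tasks. The problem is the cheap-part claim $|OPT_{\le p}|\le f(\mathcal{J})+O(1)$, which is false by an unbounded margin, not a floor term. Take budget $B$, user $1$ with bid $\epsilon<1/B$ and capacity $B-1$, user $2$ with bid $1$ and capacity $2B$. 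The algorithm accepts both (the test for user $2$ is $1\le B/((B-1)+1)$), returns $p=1$, and allocates $f_1=B-1$, $f_2=1$, so $f(\mathcal{J})=B=\lfloor B/p\rfloor$; but the optimum buys $B-1$ tasks from user $1$ for less than one unit of budget and another $B-1$ tasks from user $2$, so $L=2B-2$ and \emph{every} optimal task sits at a user with bid $\le p$. Hence the cheap part alone is about $2f(\mathcal{J})$, and your greedy-dominance justification cannot repair this: at the uniform price $p$ the count is throttled by $\lfloor B/p\rfloor$, whereas the optimum pays each cheap user its own (much smaller) bid and can buy nearly twice as many tasks from the same users. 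That factor of two is exactly where the $1/2$ in the lemma comes from, so it cannot be absorbed as rounding; and with a corrected independent cheap bound of roughly $2\lfloor B/p\rfloor$, summing the two parts separately would only yield $L\le 3f(\mathcal{J})$, not the claimed $2f(\mathcal{J})$. (A smaller issue: splitting strictly at bid $>p$ also mishandles rejected users tied at bid exactly $p$.)

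The correct argument couples the two parts through the shared budget and splits at the index $k$ of the last accepted user rather than strictly at the price. Because the cap $\lfloor B/b_i\rfloor$ can bind only at the last accepted user, there are two cases. If it never binds, $f(\mathcal{J})=\sum_{j\le k}\tau_j$, so the optimum's tasks at users $1,\dots,k$ are at most $f(\mathcal{J})$ by capacity, and the rejected-user bound above handles the rest, giving $L\le 2f(\mathcal{J})$. If it binds at $k$, then $f(\mathcal{J})=\lfloor B/p\rfloor$, the acceptance inequality $b_k\le B/(\sum_{j<k}\tau_j+1)$ gives $\sum_{j<k}\tau_j\le\lfloor B/p\rfloor-1$, and every optimal task placed at a user of index $\ge k$ costs at least $p$, so there are at most $\lfloor B/p\rfloor$ of them; hence $L\le 2\lfloor B/p\rfloor-1$. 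In both cases at least $L/2$ tasks are affordable at the returned price. The missing ingredient in your proposal is precisely this use of the acceptance condition of the last accepted user to bound the cheap capacity, in place of the false claim that uniform-price greedy dominates the bid-paying optimum on the cheap subinstance.
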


According to Lemma \ref{lemma:WWW}, we have $f(\mathcal{Z}_1')\geq \delta L'/2=\delta L/4$.

\begin{lemma}
\label{lemma:iid}
  When $\delta=2$, we have $\mathbb{E}[f(\mathcal{Z}_2')]\geq L/2$ under the i.i.d. model.
\end{lemma}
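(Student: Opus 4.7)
The plan is to compare $f(\mathcal{Z}_2')$, the tasks allocated by Hetero-OMZ at its final stage, against an offline benchmark $\mathcal{Z}$ obtained by running Algorithm \ref{alg:budgetFeasible} on the full user set $\mathcal{U}$ with the doubled budget $2B$, then exploit the symmetry between the two halves of the i.i.d.\ stream. First, I would record the structural facts that follow immediately from the last-stage construction of Algorithm \ref{alg:Hetero-OMZ} (in its budget-constraint form). With $\delta = 2$, the budget $B$ returned by Algorithm \ref{alg:get threshold2} is exactly the greedy minimum cost of $\delta L' = L$ tasks from the sample $\mathcal{S}'$, so the maximum number of tasks affordable in $\mathcal{S}'$ under budget $B$ equals $L$; Lemma \ref{lemma:WWW} then gives $f(\mathcal{Z}_1') \geq L/2$, and since Algorithm \ref{alg:budgetFeasible} fits $\mathcal{Z}_1'$ inside budget $B$ at price $p_1'$, also $\lfloor B/p_1'\rfloor \geq L/2$.

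Next I would establish $\mathbb{E}[f(\mathcal{Z})] \geq L$. Under the i.i.d.\ model, $\mathcal{S}'$ and $\mathcal{U}\setminus \mathcal{S}'$ are identically distributed, so in expectation the second half also supports $L$ tasks under its own budget $B$. Concatenating these two allocations witnesses an allocation of $2L$ tasks from $\mathcal{U}$ under budget $2B$ in expectation; a second application of Lemma \ref{lemma:WWW} to $\mathcal{U}$ then gives $\mathbb{E}[f(\mathcal{Z})] \geq L$. The exchangeability of i.i.d.\ arrivals then yields $\mathbb{E}[f(\mathcal{Z}_1)] = \mathbb{E}[f(\mathcal{Z}_2)] = \mathbb{E}[f(\mathcal{Z})]/2 \geq L/2$, because conditionally on the multiset of all users, each element of the deterministic set $\mathcal{Z}$ lands in either half with equal probability.

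Finally, I would show $f(\mathcal{Z}_2') \geq f(\mathcal{Z}_2)$ pathwise, at which point the desired bound $\mathbb{E}[f(\mathcal{Z}_2')] \geq L/2$ follows. The intended argument is that every $i \in \mathcal{Z}_2$ arrives in the second half with bid $b_i \leq p$, where $p$ is the threshold of $\mathcal{Z}$; one then uses the proportional-share identities $p\cdot f(\mathcal{Z}) \leq 2B$ and $p_1' \cdot f(\mathcal{Z}_1') \leq B$, combined with $\mathbb{E}[f(\mathcal{Z})] \geq 2\mathbb{E}[f(\mathcal{Z}_1')]$ from the previous paragraph, to compare $p$ with $p_1'$, concluding that each $i \in \mathcal{Z}_2$ passes Hetero-OMZ's threshold test at the last stage, and that the total payment $p_1' \cdot f(\mathcal{Z}_2)$ stays within the budget $B$ allotted to that stage.

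I expect this last step to be the main obstacle: comparing the thresholds $p$ and $p_1'$ drawn from very differently sized samples and budgets is delicate, and one must also handle the sequential, partial-allocation character of the online rule so that no user in $\mathcal{Z}_2$ is excluded merely because the online budget exhausts a bit before the offline allocation would. Once that comparison and the budget-fit are in hand, the remaining $\lfloor \cdot \rfloor$ corrections for the last, partially allocated user are routine.
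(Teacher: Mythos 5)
Your setup matches the paper's: the benchmark $\mathcal{Z}$ run on all of $\mathcal{U}$ with budget $2B$, the bound $f(\mathcal{Z}_1')\geq \delta L'/2=L/2$ from Lemma \ref{lemma:WWW}, and the symmetry $\mathbb{E}[f(\mathcal{Z}_1)]=\mathbb{E}[f(\mathcal{Z}_2)]=\mathbb{E}[f(\mathcal{Z})]/2$ (your concatenation argument for $\mathbb{E}[f(\mathcal{Z})]\geq L$ is a reasonable way to make explicit what the paper states more briefly). But the step you yourself flag as ``the main obstacle'' is a genuine gap, and the route you sketch for it will not close. You want the pathwise domination $f(\mathcal{Z}_2')\geq f(\mathcal{Z}_2)$, which requires (i) $b_i\leq p_1'$ for every $i\in\mathcal{Z}_2$, i.e.\ essentially $p\leq p_1'$ on each realization, and (ii) that the stage budget $B$ accommodates all of $\mathcal{Z}_2$. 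Neither holds in general: $p$ and $p_1'$ are thresholds computed from different sets \emph{and} different budgets ($2B$ vs.\ $B$), so the monotonicity trick available in the secretary analysis (same budget, smaller sample $\Rightarrow$ higher price, used in the proof of Lemma \ref{lemma:secretary}) does not apply here, and no comparison of expectations such as $\mathbb{E}[f(\mathcal{Z})]\geq 2\mathbb{E}[f(\mathcal{Z}_1')]$ can yield a realization-wise inequality between two random prices; moreover the online stage can indeed exhaust $B$ before the members of $\mathcal{Z}_2$ arrive, so $p_1'f(\mathcal{Z}_2)\leq B$ can simply fail.

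The paper's proof avoids both difficulties with an idea missing from your plan: a case split on whether the last-stage budget $B$ is exhausted. If it is exhausted, then $f(\mathcal{Z}_2')=B/p_1'\geq f(\mathcal{Z}_1')\geq \delta L/4=L/2$ (using only $p_1'f(\mathcal{Z}_1')\leq B$), with no reference to $\mathcal{Z}_2$ at all --- this disposes of exactly your ``budget runs out early'' worry, because running out of budget is itself a certificate of enough allocated tasks. If it is not exhausted, the only remaining question is the threshold test, and the paper handles it \emph{in expectation} rather than pathwise, via the distributional identity $\mathbb{E}[p_1']=\mathbb{E}[p]$ (the two thresholds are computed on identically distributed halves with proportionally scaled budgets), giving $\mathbb{E}[f(\mathcal{Z}_2')]=\mathbb{E}[f(\mathcal{Z}_2)]=\mathbb{E}[f(\mathcal{Z})]/2\geq L/2$. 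Since the lemma asserts only an expectation bound, nothing pathwise is needed; replacing your pathwise comparison by this case analysis plus the expected-threshold identity is what completes the argument.
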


The proof of Lemma \ref{lemma:iid} is given in Appendix A.

Different from the i.i.d. model, under the secretary model we make $\mathcal{Z}$ denote the set of selected users computed by Algorithm \ref{alg:budgetFeasible} based on the set of users $\mathcal{U}$ arriving before $T$ and the budget $B$.
Other notations and concepts remain unchanged.
In addition, it is assumed that the number of tasks that each user can complete is at most $f(\mathcal{Z})/\omega$.
In order to facilitate analysis, we introduce a lemma as follows.
\begin{lemma}
\label{lemma:secretary_ref}
(\cite{bateni2010submodular}, Lemma 16) For sufficiently large $\omega$, the random variable $|f(\mathcal{Z}_1)-f(\mathcal{Z}_2)|$ is bounded by $f(\mathcal{Z})/2$ with a constant probability.
\end{lemma}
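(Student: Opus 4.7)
The plan is to view the random permutation of users as a random partition of $\mathcal{Z}$ into $\mathcal{Z}_1$ and $\mathcal{Z}_2$, and then apply a concentration inequality whose strength is controlled by the ``no single user is too big'' hypothesis $f_i \leq f(\mathcal{Z})/\omega$. The reformulation I would use at the outset is
\[
|f(\mathcal{Z}_1)-f(\mathcal{Z}_2)| \;=\; |\,2f(\mathcal{Z}_1)-f(\mathcal{Z})\,|,
\]
so the target bound $|f(\mathcal{Z}_1)-f(\mathcal{Z}_2)|\leq f(\mathcal{Z})/2$ is equivalent to showing that $f(\mathcal{Z}_1)$ concentrates around $f(\mathcal{Z})/2$ to within an additive $f(\mathcal{Z})/4$, with constant probability.

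Next I would set up the probability model carefully. In the secretary model, the $n$ users arrive in a uniformly random order, so the sample $\mathcal{S}'$ (users seen in the first half of the stream) is distributed as a uniformly random subset of size $\lfloor n/2\rfloor$ drawn from $\mathcal{U}$ without replacement. Writing $X_i=\mathbf{1}[i\in\mathcal{S}']$ for $i\in\mathcal{Z}$, we have
\[
f(\mathcal{Z}_1)=\sum_{i\in\mathcal{Z}} f_i X_i,\qquad \mathbb{E}\bigl[f(\mathcal{Z}_1)\bigr]=\tfrac{\lfloor n/2\rfloor}{n}\,f(\mathcal{Z})\;\approx\;\tfrac{1}{2}f(\mathcal{Z}).
\]
The indicators $X_i$ are not independent (sampling is without replacement), but they are negatively associated, which only helps concentration.

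Then I would bound the variance. Using the hypothesis $f_i\le f(\mathcal{Z})/\omega$ for every $i$, and the fact that $\mathrm{Var}(X_i)\le 1/4$ together with negative correlation between distinct $X_i,X_j$,
\[
\mathrm{Var}\bigl(f(\mathcal{Z}_1)\bigr)\;\le\;\sum_{i\in\mathcal{Z}}f_i^{\,2}\,\mathrm{Var}(X_i)\;\le\;\tfrac{1}{4}\,\max_{i\in\mathcal{Z}}f_i\cdot f(\mathcal{Z})\;\le\;\frac{f(\mathcal{Z})^{2}}{4\omega}.
\]
Chebyshev's inequality then yields
\[
\Pr\!\left[\,\bigl|f(\mathcal{Z}_1)-\tfrac{1}{2}f(\mathcal{Z})\bigr|>\tfrac{1}{4}f(\mathcal{Z})\right]\;\le\;\frac{f(\mathcal{Z})^{2}/(4\omega)}{(f(\mathcal{Z})/4)^{2}}\;=\;\frac{4}{\omega}.
\]
Absorbing the small bias from $\lfloor n/2\rfloor/n$ into the slack (or, equivalently, applying Hoeffding's inequality for sampling without replacement for a sharper constant), I would conclude that for $\omega$ sufficiently large (any $\omega>8$ suffices to make the failure probability at most $1/2$) the event $|f(\mathcal{Z}_1)-f(\mathcal{Z}_2)|\le f(\mathcal{Z})/2$ holds with constant probability.

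The main obstacle I anticipate is purely technical rather than conceptual: making the variance computation airtight under sampling without replacement, since the cross-terms $\mathrm{Cov}(X_i,X_j)$ are nonzero. I would handle this by either invoking the standard variance formula for a random size-$\lfloor n/2\rfloor$ subset (which multiplies the independent-Bernoulli bound by a factor $\le 1$) or by replacing Chebyshev with the Hoeffding--Serfling inequality for sampling without replacement, which gives exponentially small failure probability in $\omega$ and thus easily delivers the ``constant probability'' conclusion.
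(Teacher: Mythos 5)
The paper does not actually prove this lemma: it is imported verbatim as Lemma~16 of \cite{bateni2010submodular} and used as a black box, so there is no in-paper argument to compare yours against. Judged on its own, your proof is correct and is essentially the standard second-moment argument for such random-split lemmas (and, as far as I recall, close in spirit to the cited source's own proof). The reduction $|f(\mathcal{Z}_1)-f(\mathcal{Z}_2)|=|2f(\mathcal{Z}_1)-f(\mathcal{Z})|$ is valid because $f$ is linear and $\mathcal{Z}_1,\mathcal{Z}_2$ partition $\mathcal{Z}$; the variance bound $\sum_i f_i^2\,\mathrm{Var}(X_i)\le \tfrac14\max_i f_i\cdot f(\mathcal{Z})\le f(\mathcal{Z})^2/(4\omega)$ correctly exploits the hypothesis $f_i\le f(\mathcal{Z})/\omega$, and discarding the cross-terms is legitimate since the inclusion indicators under sampling without replacement are negatively correlated ($\mathrm{Cov}(X_i,X_j)=-\tfrac{m(n-m)}{n^2(n-1)}<0$ for a uniform size-$m$ subset). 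Chebyshev then gives failure probability $4/\omega$, which is below any fixed constant once $\omega$ is large enough. Two small points to tidy up: (i) you should state explicitly that $\mathcal{Z}$ is a deterministic set in the secretary model (the adversary fixes the multiset of profiles, hence the offline selection $\mathcal{Z}$), with only the partition into halves being random --- otherwise conditioning issues would arise; and (ii) the centering discrepancy $|\mathbb{E}[f(\mathcal{Z}_1)]-f(\mathcal{Z})/2|\le f(\mathcal{Z})/(2n)$ needs the one line you allude to (shrink the Chebyshev radius from $f(\mathcal{Z})/4$ to, say, $f(\mathcal{Z})/4-f(\mathcal{Z})/(2n)$, which changes only the constant). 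With those noted, the argument is complete and self-contained, which is arguably an improvement over the paper's bare citation.
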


Because $f(\mathcal{X})$ is a linear function, we have $f(\mathcal{Z}_1)+f(\mathcal{Z}_2)=f(\mathcal{Z})$.
Thus, Lemma \ref{lemma:secretary_ref} can be easily extended to the following corollary.
\begin{corollary}
\label{corollary}
For sufficiently large $\omega$, both $f(\mathcal{Z}_1)$ and $f(\mathcal{Z}_2)$ are at least $f(\mathcal{Z})/4$ with a constant probability.
\end{corollary}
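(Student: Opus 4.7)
The plan is to derive the corollary by combining the linearity identity $f(\mathcal{Z}_1)+f(\mathcal{Z}_2)=f(\mathcal{Z})$ with the bound on the absolute difference given by the cited Lemma 16 of Bateni et al. Since the corollary is a purely algebraic consequence of these two facts, I expect the proof to be short and there should be no real obstacle — the only subtlety is to be careful that the "constant probability" guarantee carries through unchanged, because we are not introducing any additional random event beyond the one already controlled by the lemma.

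First I would condition on the event $\mathcal{E}=\{|f(\mathcal{Z}_1)-f(\mathcal{Z}_2)|\leq f(\mathcal{Z})/2\}$, which by Lemma~\ref{lemma:secretary_ref} occurs with constant probability for sufficiently large $\omega$. Next I would observe that on $\mathcal{E}$ the quantity $\min\{f(\mathcal{Z}_1),f(\mathcal{Z}_2)\}$ admits the identity
\begin{equation*}
\min\{f(\mathcal{Z}_1),f(\mathcal{Z}_2)\}=\tfrac{1}{2}\bigl(f(\mathcal{Z}_1)+f(\mathcal{Z}_2)-|f(\mathcal{Z}_1)-f(\mathcal{Z}_2)|\bigr).
\end{equation*}
Substituting $f(\mathcal{Z}_1)+f(\mathcal{Z}_2)=f(\mathcal{Z})$ (the linearity of $f$) and the bound from $\mathcal{E}$ on the difference, I would conclude
\begin{equation*}
\min\{f(\mathcal{Z}_1),f(\mathcal{Z}_2)\}\geq \tfrac{1}{2}\bigl(f(\mathcal{Z})-f(\mathcal{Z})/2\bigr)=f(\mathcal{Z})/4,
\end{equation*}
which gives the claim for both $f(\mathcal{Z}_1)$ and $f(\mathcal{Z}_2)$ simultaneously on $\mathcal{E}$.

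Since $\mathcal{E}$ itself has constant probability by the cited lemma, the simultaneous lower bound also holds with constant probability, completing the proof. The only thing to double-check is that the linearity step is legitimate — it is, because $f(\mathcal{X})=\sum_{i\in\mathcal{X}}f_i$ is a sum over a finite set and the pair $(\mathcal{Z}_1,\mathcal{Z}_2)$ is by definition a disjoint partition of $\mathcal{Z}$ induced by the split of the input stream into its first and second halves.
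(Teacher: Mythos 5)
Your proof is correct and follows essentially the same route as the paper, which simply notes that the identity $f(\mathcal{Z}_1)+f(\mathcal{Z}_2)=f(\mathcal{Z})$ (by linearity of $f$) combined with the bound $|f(\mathcal{Z}_1)-f(\mathcal{Z}_2)|\leq f(\mathcal{Z})/2$ from Lemma~\ref{lemma:secretary_ref} immediately yields the corollary. Your explicit use of $\min\{a,b\}=\tfrac{1}{2}(a+b-|a-b|)$ just spells out the algebra the paper leaves implicit, and the constant-probability guarantee carries over exactly as you describe.
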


\begin{lemma}
\label{lemma:secretary}
  For sufficiently large $\omega$, when $\delta=8$, we have $f(\mathcal{Z}_2')\geq L/2$ with a constant probability under the secretary model.
\end{lemma}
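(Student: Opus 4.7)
The plan is to combine two ingredients already at our disposal: the ``budget-feasible gives half of optimum'' guarantee of Lemma~\ref{lemma:WWW}, and the balanced-split guarantee of Corollary~\ref{corollary}. The end goal is to produce an explicit witness set of at least $L/2$ tasks living in $\mathcal{U}\setminus\mathcal{S}'$ that pass both the bid filter $b_i\leq p_1'$ and the budget $B$ at the final stage of Algorithm~\ref{alg:Hetero-OMZ}.

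First I would control the size of the global witness $\mathcal{Z}$. By the construction of $B$ in Algorithm~\ref{alg:get threshold2}, the greedy sub-routine allocates exactly $\delta L'=4L$ tasks from $\mathcal{S}'$ at total cost $B$; since $\mathcal{S}'\subseteq\mathcal{U}$ this remains a feasible allocation on $\mathcal{U}$ under budget $B$, so the maximum number of tasks allocatable on $\mathcal{U}$ under $B$ is at least $4L$. Lemma~\ref{lemma:WWW} applied to $\mathcal{U}$ with budget $B$ then gives $f(\mathcal{Z})\geq 2L$. Corollary~\ref{corollary} next yields, for sufficiently large $\omega$ and with constant probability over the secretary permutation, $f(\mathcal{Z}_2)\geq f(\mathcal{Z})/4\geq L/2$. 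The choice $\delta=8$ is forced here: anything smaller would fail to give $f(\mathcal{Z})/4\geq L/2$. Crucially, every user of $\mathcal{Z}_2$ lies in $\mathcal{U}\setminus\mathcal{S}'$ and has bid at most $p$, the threshold of $\mathcal{Z}$.

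Next I would show that the last-stage loop actually harvests those $L/2$ tasks. On the budget side, Lemma~\ref{lemma:WWW} applied to $\mathcal{S}'$ with budget $B$ gives $f(\mathcal{Z}_1')\geq \delta L'/2=2L$, and since the total payment of $\mathcal{Z}_1'$ at price $p_1'$ is at most $B$, we deduce $B/p_1'\geq 2L$; the budget therefore never forces the online loop to stop before $L/2$ tasks are allocated. On the threshold side, I need the bids of $\mathcal{Z}_2$ to clear the filter $b_i\leq p_1'$. A key monotonicity observation is that any user in $\mathcal{Z}_1$ is always accepted by Algorithm~\ref{alg:budgetFeasible} run on $\mathcal{S}'$ as well --- at the step when it is processed, the accumulated $\sum f_j$ on $\mathcal{S}'$ is no larger than on $\mathcal{U}$, so the acceptance test is easier --- which proves $\mathcal{Z}_1\subseteq\mathcal{Z}_1'$ and hence $p_1'\geq\max_{i\in\mathcal{Z}_1} b_i$. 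Combined with the secretary-style observation that under a uniform random partition the marginal cutoff user of BFM on $\mathcal{U}$ falls into $\mathcal{S}'$ with constant probability, this yields $p_1'\geq p$ on a constant-probability event.

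The main obstacle is precisely this last threshold-comparison step, since $p_1'\geq p$ is \emph{not} true pointwise: one can craft small adversarial splits in which $\mathcal{S}'$ receives only extreme bids and $p_1'$ collapses well below $p$. Establishing the step therefore really needs both the randomness of the secretary permutation and the large-$\omega$ hypothesis, the latter preventing any single user from carrying an outsized share of tasks and thereby skewing the BFM cutoff (and also absorbing the boundary/rounding error of the online greedy allocation at the last stage). Once $p_1'\geq p$ holds on this good event, the $L/2$ tasks supplied by $\mathcal{Z}_2$ are eligible, the budget $B/p_1'\geq 2L$ accommodates them, and the online loop allocates at least $L/2$ tasks; intersecting with the constant-probability event of Corollary~\ref{corollary} gives $f(\mathcal{Z}_2')\geq L/2$ with constant probability, which is the claim.
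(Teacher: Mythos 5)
Your proposal follows the same skeleton as the paper's Appendix~B proof: Lemma~\ref{lemma:WWW} applied to $\mathcal{S}'$ gives $f(\mathcal{Z}_1')\geq \delta L'/2=2L$ and hence $B/p_1'\geq 2L$ (this is exactly the paper's Case~1, where the budget is exhausted), Corollary~\ref{corollary} gives $f(\mathcal{Z}_2)\geq f(\mathcal{Z})/4\geq L/2$ with constant probability, and $\delta=8$ plays the same arithmetic role. Your route to $f(\mathcal{Z})\geq 2L$ (the greedy allocation on $\mathcal{S}'$ is a feasible allocation of $\delta L'=4L$ tasks on $\mathcal{U}$ under $B$, then Lemma~\ref{lemma:WWW}) is a harmless variant of the paper's chain $f(\mathcal{Z})\geq f(\mathcal{Z}_1')\geq \delta L/4$. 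The genuine divergence is at the threshold comparison. The paper's Case~2 asserts $p_1'\geq p$ \emph{pointwise}, ``because $p_1'$ is computed over a smaller subset,'' and concludes that every user of $\mathcal{Z}_2$ clears the last-stage threshold; you treat this as the main difficulty, argue it can fail pointwise, and recover it only on a constant-probability event via the inclusion $\mathcal{Z}_1\subseteq\mathcal{Z}_1'$ together with the cutoff user of Algorithm~\ref{alg:budgetFeasible} on $\mathcal{U}$ landing in $\mathcal{S}'$. Your caution is warranted: if the run on $\mathcal{S}'$ stops because it exhausts the cheap users of the sample while the run on $\mathcal{U}$ goes on to accept a second-half bid slightly above $p_1'$, then $p>p_1'$, so the paper's one-line monotonicity claim is not literally valid, whereas the direction you use ($\mathcal{Z}_1\subseteq\mathcal{Z}_1'$, hence $p_1'\geq\max_{i\in\mathcal{Z}_1}b_i$) is the provable one. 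In this sense your argument is more careful than the paper's at exactly its thinnest point, at the price of an extra random event.

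That extra event is also where your write-up has a real gap: the closing step ``intersecting with the constant-probability event of Corollary~\ref{corollary}'' is not justified as stated. Corollary~\ref{corollary} holds with some unspecified constant probability $q$, and the cutoff user lies in $\mathcal{S}'$ with probability about $1/2$; from these facts alone the intersection can be empty (the bound $\Pr[A\cap B]\geq \Pr[A]+\Pr[B]-1$ is vacuous unless $q>1/2$). To finish along your lines you must either quantify the constants (e.g., show the balanced-split event can be made to hold with probability strictly above $1/2$ for large $\omega$), or give a symmetry/conditioning argument over the random partition showing the two events are not antagonistic. Similarly, your remark that large $\omega$ ``absorbs the boundary/rounding error'' is asserted rather than argued. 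The paper avoids all of this bookkeeping only because it takes $p_1'\geq p$ for granted; as written, your proof is structurally sound and arguably tighter in spirit, but this final probabilistic step remains open.
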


The proof of Lemma \ref{lemma:secretary} is given in Appendix B.
The above analysis proves the following theorem.
\begin{theorem}
The \emph{Hetero-OMZ} mechanism satisfies the computational efficiency, individual rationality, truthfulness, consumer sovereignty, and constant frugality under the zero arrival-departure interval model.
\end{theorem}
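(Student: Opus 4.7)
The plan is to handle the first four properties by near-verbatim adaptations of the Homo-OMZ proofs, reserving the substantive work for the constant-frugality claim, which exploits the machinery of Algorithm~\ref{alg:get threshold2} together with Lemmas~\ref{lemma:WWW}--\ref{lemma:secretary}.

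For computational efficiency I would note that at each time step the per-user allocation and payment work is $O(1)$, and the only non-trivial cost is the end-of-stage call to \textbf{GetBidThreshold2}, which sorts $\mathcal{S}'$ and then invokes \textbf{BudgetFeasibleMechanism}; both run in polynomial time in $|\mathcal{S}'|$. For individual rationality, whenever $i\in\mathcal{S}$ the payment satisfies $p_i=b^*\geq b_i$, so $u_i=f_i(p_i-c_i)\geq 0$ under truthful bidding, while unselected users have $u_i=0$. For cost-truthfulness, the payment $p_i=b^*$ is determined entirely by the prefixal sample set $\mathcal{S}'$ of users arriving strictly before $i$, and the acceptance rule is a fixed threshold on $b_i$; the mechanism is therefore bid-independent in the sense of Definition~1, and Proposition~1 delivers truthfulness. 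For consumer sovereignty, each stage is simultaneously a sampling and accepting stage, so every user, including those arriving in the first stage against the initial threshold $\beta$, still has a chance to win provided its bid is sufficiently low and stage capacity remains.

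For constant frugality, the plan is to follow the reduction sketched just before Lemma~\ref{lemma:WWW}: replace the stage-task-number cap with the equivalent budget cap $B$ per stage, so that at the final stage $B$ equals, by line~7 of Algorithm~\ref{alg:get threshold2}, the minimum offline cost for performing $\delta L/2$ tasks from $\mathcal{S}'$. I would then split into the two distributional cases. Under the i.i.d.\ model with $\delta=2$, Lemma~\ref{lemma:iid} gives $\mathbb{E}[f(\mathcal{Z}_2')]\geq L/2$; under the secretary model with $\delta=8$, Lemma~\ref{lemma:secretary} (which in turn leans on Corollary~\ref{corollary}) gives $f(\mathcal{Z}_2')\geq L/2$ with constant probability. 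In either regime the last stage allocates at least $L/2$ tasks while spending at most $B$, so $L$ tasks are allocated in expectation and the total payment is at most a constant times the minimum offline cost of $L$ tasks, yielding the claimed $\delta$-frugality with $\delta\in\{2,8\}$.

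The main obstacle is the constant-frugality step, and inside it the bookkeeping needed to translate ``$\geq L/2$ tasks allocated at the last stage under budget $B$'' into the $\delta$-frugality definition of Section~\ref{subsec:objective}: one must relate $B$, which is an offline minimum on the sample $\mathcal{S}'$, to the offline minimum over the full user set $\mathcal{U}$, and verify that payments accumulated during the earlier (geometrically smaller) stages do not inflate the total by more than a constant factor. Once that reduction is in place, the explicit constants supplied by Lemmas~\ref{lemma:iid} and~\ref{lemma:secretary} close the theorem.
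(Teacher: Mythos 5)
Your proposal follows essentially the same route as the paper: the first four properties are inherited from the Homo-OMZ arguments, and constant frugality is established exactly as in the paper via the budget-constraint reformulation of Algorithm~\ref{alg:Hetero-OMZ}, Lemma~\ref{lemma:WWW}, and then Lemma~\ref{lemma:iid} ($\delta=2$, i.i.d.) and Corollary~\ref{corollary}/Lemma~\ref{lemma:secretary} ($\delta=8$, secretary). Two small remarks: the guarantee is the ``realistic'' $\delta$-frugality (payment at most the minimum cost of $\delta L$ tasks), not ``a constant times the minimum offline cost of $L$ tasks'' as one of your phrases suggests (footnote~1 rules that out), and the bookkeeping you flag as an open obstacle---relating $B$ on the sample $\mathcal{S}'$ to the offline optimum over $\mathcal{U}$ and accounting for earlier-stage payments---is likewise left at the level of assertion in the paper's own proof, so your treatment matches it.
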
 

\section{Online Mechanism under General Interval Model}
\label{sec:general case}
In this section, we consider the general interval model, and there may be multiple online users in the auction simultaneously.
Firstly, we change the settings of Example \ref{exmaple1} to show that the \emph{Hetero-OMZ} mechanism is not time-truthful in the general interval model.

\begin{example}
\label{example2}
All the settings are the same as Example \ref{exmaple1} except for that user 1 has a non-zero arrival-departure interval, $a_1<d_1$.
Specially, the profile of user 1 is $\theta_1=(1,5,4,2)$.
\end{example}

In this example, if user 1 report its profile truthfully, then it will obtain the payment 5 according to the \emph{Hetero-OMZ} mechanism.
However, if user 1 delays announcing its arrival time and reports $\theta_1'=(5,5,4,2)$, then it will improve its payment to 20 according to the \emph{Hetero-OMZ} mechanism (the detailed computing process is omitted).

In the following, we will present a new online mechanism, \emph{Hetero-OMG}, and prove that it satisfies all five desirable properties in the general interval model.

\subsection{Mechanism Design}
Since the \emph{Hetero-OMZ} mechanism can be applied in both the homogeneous user model and the heterogeneous user model, and can satisfy several desirable properties, we adopt a similar algorithm framework under the general interval model.
Meanwhile, in order to guarantee the \emph{cost-} and \emph{time-truthfulness}, it is necessary to modify the \emph{Hetero-OMZ} mechanism based on three principles.
First, any user is added to the sample set only when it departs; otherwise, the bid-independence will be destroyed if its arrival-departure time spans multiple stages, because a user can indirectly affect its payment now.
Second, if there are multiple users who have not yet departed at some time, we sort these online users according to the numbers of tasks that they can complete, instead of their bids, and preferentially select those users with higher number of tasks.
In this way, the bid-independence can be held.
Third, whenever a new time step arrives, it scans through the list of users who have not yet departed and allocates tasks to those whose bids are not larger than the current bid threshold under the stage-task-number constraint, even if some arrived much earlier.
At the departure time of any user who was selected as a winner, the user is paid for a payment equal to the maximum payment attained during the user's reported arrival-departure interval, even if this payment is larger than the payment at the time step when the user was selected as a winner.

\begin{algorithm}
\SetAlgoLined
\caption{Heterogeneous Online Mechanism under General Interval Model (\emph{Hetero-OMG})}
\label{alg:Hetero-OMG}
\KwIn{Task Number $L$, deadline $T$}
$(t,T',L',\mathcal{S}',b^*,\mathcal{S}) \leftarrow ( 1, \frac{T}{2^{\lfloor \log_2 T \rfloor}}, \frac{L}{2^{\lfloor \log_2 T \rfloor}}, \emptyset, \beta, \emptyset)$\;
\While{$t\leq T$}
{
    Add all new users arriving at time step $t$ to a set of online users $\mathcal{O}$; $\mathcal{O}'\leftarrow \mathcal{O}\setminus S$\;
    \Repeat{$\mathcal{O}'=\emptyset$}
    {
        $i\leftarrow \arg\max_{j\in \mathcal{O}'}\tau_j$\;
        \eIf{$b_i\leq b^*$ {\bf and} $\sum_{j\in \mathcal{S}}f_j < L'$}
        {
            $f_i\leftarrow \min\{\tau_i,L'-\sum_{j\in \mathcal{S}}f_j\}$; $p_i\leftarrow b^*$\;
            $\mathcal{S}\leftarrow \mathcal{S} \cup \{i\}$\;
        }{
            $f_i\leftarrow 0$; $p_i\leftarrow 0$\;
        }
        $\mathcal{O}'\leftarrow \mathcal{O}'\setminus \{i\}$\;
    }
    Remove all users departing at time step $t$ from $\mathcal{O}$, and add them to $\mathcal{S}'$\;
    \If{$t=\lfloor T' \rfloor$}
    {
        $b^*\leftarrow \textbf{GetBidThreshold2}(L',\mathcal{S}')$\;
        $T'\leftarrow 2T'$; $L'\leftarrow 2L'$; $\mathcal{O}'\leftarrow \mathcal{O}$\;
        \Repeat{$\mathcal{O}'=\emptyset$}
        {
            $i\leftarrow \arg\max_{j\in \mathcal{O}'}\tau_j$\;
            \If{$b_i\leq b^*$ {\bf and} $\min\{\tau_i,L'+f_i-\sum_{j\in \mathcal{S}}f_j\}b^*>f_ip_i$}
            {
                $f_i\leftarrow \min\{\tau_i,L'+f_i-\sum_{j\in \mathcal{S}}f_j\}$; $p_i\leftarrow b^*$\;
                \lIf{$i \notin \mathcal{S}$}
                {
                    $\mathcal{S}\leftarrow \mathcal{S} \cup \{i\}$\;
                }
            }
            $\mathcal{O}'\leftarrow \mathcal{O}'\setminus \{i\}$\;
        }
    }
    $t\leftarrow t+1$\;
}
\end{algorithm}

According to the above principles, we design the \emph{Hetero-OMG} mechanism satisfying all desirable properties under the general interval model, as described in Algorithm \ref{alg:Hetero-OMG}.
Specially, we consider two cases.
The first case is when the current time step $t$ is not at the end of any stage.
In this case, the bid threshold remains unchanged.
The following operations (the lines 3-14 in Algorithm \ref{alg:Hetero-OMG}) are performed.
First, all new users arriving at time step $t$ are added to a set of online users $\mathcal{O}$.
Then we make decision on whether to select these online users one by one in the order of the numbers of tasks that they can complete; the users who can complete more tasks will be selected first.
If an online user $i$ has been selected as a winner before time step $t$, we need not to make decision on it again because it is impossible to obtain a higher payment than before (to be proved later in Appendix C).
Otherwise, we need to make decision on it again: it will be allocated multiple tasks upper bounded by the number of tasks that it has announced, as long as its bid is not less than the current bid threshold $b^*$, and the allocated stage-task-number $L'$ has not been achieved;
meanwhile, we pay user $i$ at price $p_i=b^*$ per task, and add it to the set of selected users $\mathcal{S}$.
Finally, we remove all users departing at time step $t$ from $\mathcal{O}$, and add them to the sample set $\mathcal{S}'$.

The second case is when the current time step is just at the end of some stage.
In this case, the bid threshold will be updated.
The mechanism works as the lines 16-25.
We need to make decision on whether to select these online users, and at what prices, one by one in the order of the numbers of tasks that they can complete, no matter whether they have ever been selected as the winners before time step $t$.
As shown in the lines 20-23, if user $i$ can obtain a higher payment than before (this could be because that this user is given a higher price or is allocated more tasks), its price or number of tasks allocated will be updated.
Meanwhile, if user $i$ has never been selected as a winner before time step $t$, it will be added to the set $\mathcal{S}$.

Return to Example \ref{example2}.
If all of the five users report their types truthfully, then the \emph{Hetero-OMG} mechanism works as follows.
\begin{itemize*}
    \item[$\diamond$] $t=1$: $(T',B',\mathcal{S}',b^*,\mathcal{S})=(1,1,\emptyset,5,\emptyset)$, $f_1=1$, $p_1=5$, $\mathcal{S}=\{1\}$, $\mathcal{S}'=\emptyset$. Update the bid threshold: $b^*=5$, update the allocation and payment of user 1: $f_1=2$, $p_1=5$.
    \item [$\diamond$] $t=2$: $(T',B',\mathcal{S}',b^*,\mathcal{S})=(2,2,\emptyset,5,\{1\})$, $f_2=0$, $p_2=0$, $\mathcal{S}=\{1\}$, $\mathcal{S}'=\{2\}$. Update the bid threshold: $b^*=4$, update the allocation and payment of user 1: $f_1=4$, $p_1=4$.
    \item [$\diamond$] $t=4$: $(T',B',\mathcal{S}',b^*,\mathcal{S})=(4,4,\{2\},4,\{1\})$, $f_3=0$, $p_3=0$, $\mathcal{S}=\{1\}$, $\mathcal{S}'=\{2,3\}$. Update the bid threshold: $b^*=5$, update the allocation and payment of user 1: $f_1=4$, $p_1=5$.
    \item [$\diamond$] $t=5$: user 1 departs, so $\mathcal{S}'=\{1,2,3\}$.
    \item [$\diamond$] $t=6$: $(T',B',\mathcal{S}',b^*,\mathcal{S})=(8,8,\{1,2,3\},5,\{1\})$, $f_4=4$, $p_4=5$, $\mathcal{S}=\{1,4\}$, $\mathcal{S}'=\{1,2,3,4\}$. Now all of 8 tasks have been allocated.
    \item [$\diamond$] $t\!=\!7\!$: $\!(T',B',\mathcal{S}',b^*,\mathcal{S})\!=\!(8,8,\{1,2,3,4\},5,\{1,4\})$, $f_5=0$, $p_5=0$, $\mathcal{S}=\{1,4\}$, , $\mathcal{S}'=\{1,2,3,4,5\}$.
\end{itemize*}

Thus, user 1 can obtain the payment 20 according to the \emph{Hetero-OMG} mechanism.
Even if user 1 delays announcing its arrival time and reports $\theta_1'=(5,5,4,2)$, it still cannot improve its payment (the detailed computing process is omitted).
Therefore, the time-truthfulness can be guaranteed in the general interval model.

\subsection{Mechanism Analysis}
It is convenient to prove that the \emph{Hetero-OMG} mechanism also satisfies the \emph{computational efficiency}, \emph{individual rationality}, \emph{consumer sovereignty}, and \emph{constant frugality} as \emph{Hetero-OMZ} (with almost the same proof), although \emph{Hetero-OMG} may have slightly higher frugality ratio than \emph{Hetero-OMZ}.
Most importantly, we can prove the truthfulness of the \emph{Hetero-OMG} mechanism.
\begin{lemma}
\label{lemma:time-truthful}
  The \emph{Hetero-OMG} mechanism is cost- and time-truthful.
\end{lemma}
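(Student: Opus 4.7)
The plan is to split the proof into the cost-truthfulness and time-truthfulness parts, and to establish time-truthfulness by showing that truncating the reported interval $[a_i,d_i]$ can only weakly reduce user $i$'s utility.

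For \emph{cost-truthfulness}, I would reuse the bid-independence argument already used for \emph{Homo-OMZ} and \emph{Hetero-OMZ}. The three key points to verify are: (i) \textbf{GetBidThreshold2} only reads bids from the sample set $\mathcal{S}'$, and since $i$ is inserted into $\mathcal{S}'$ only at its reported departure (line 12), every $b^*$ computed while $i$ is still a candidate is independent of $b_i$; (ii) the two loops that iterate over online users (lines 4--11 and 17--24) process them in decreasing order of $\tau_j$, \emph{not} of bid, so the position of $i$ in the scan is also bid-independent; (iii) once considered, $i$ is accepted iff $b_i \le b^*$ and paid at rate $b^*$. These three facts give a bid-independent allocation/payment rule in the sense of the \emph{Bid-independent Online Auction} definition, and cost-truthfulness follows from the cited Proposition.

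For \emph{time-truthfulness}, fix $b_i=c_i$ and let A denote the truthful run $(a_i,d_i)$ and B the manipulated run $(\hat{a}_i,\hat{d}_i)$ with $a_i\le\hat{a}_i\le\hat{d}_i\le d_i$. The first structural observation is that the bid threshold $b^*_t$ at every time $t \le \hat{d}_i$ is the same in A and B: the sample set $\mathcal{S}'$ is driven by departures of users other than $i$, and $i$ has not yet entered $\mathcal{S}'$ in either run before $\hat{d}_i$. The second observation is that the update rule on line 20 makes the final quantity $f_i p_i$ equal to the maximum, over the time steps $t$ in the reported interval at which $i$ is scanned and $b_i \le b^*_t$, of $\min\{\tau_i,\,L'_t - \sum_{j\in\mathcal{S},j\neq i} f_j\}\cdot b^*_t$. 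Hence reporting a shorter interval can only shrink the set of time steps entering this maximum.

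The key technical step, and the main obstacle, is a \textbf{crowding-out monotonicity lemma}: for every time $t$,
\[
\sum_{j\in\mathcal{S}_t^A,\,j\neq i} f_j^A \;\le\; \sum_{j\in\mathcal{S}_t^B,\,j\neq i} f_j^B.
\]
I would prove this by induction on $t$, coupling the two runs on the identical arrival stream of users $\neq i$ and using that (a) stage boundaries fire at the same times, (b) the $\tau$-ordering scan places every other user in the same relative position in A and B, and (c) whenever $i$ is scanned in A, any tasks it consumes are subtracted from the budget $L'_t$ available to subsequent scans, strictly reducing or preserving the budget that other users see compared to B where $i$ is absent. Combined with $b^*_t$ being identical in A and B, this lemma implies that the per-time offer $\min\{\tau_i,\,L'_t - \sum_{j\neq i}f_j\}\cdot b^*_t$ to $i$ is pointwise at least as large in A as in B.

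Putting the two ingredients together, the maximum defining $f_i^A p_i^A$ is taken over a superset of time steps, with each term weakly larger than the corresponding term in B, so $f_i^A p_i^A \ge f_i^B p_i^B$ and therefore $u_i^A \ge u_i^B$. A small boundary check is needed for the case $\hat{d}_i < d_i$: this only advances $i$'s entry into $\mathcal{S}'$, which may alter $b^*$ for stages ending after $\hat{d}_i$, but those thresholds no longer affect $i$ since $i$ has left the auction. The two manipulations, late arrival and early departure, are thus both dominated by truthful reporting, which combined with cost-truthfulness establishes the lemma.
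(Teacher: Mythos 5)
Your time-truthfulness plan is broadly in the spirit of the paper's Proposition (b), and in one respect more careful: you notice that changing the reported interval changes other users' allocations and hence the offers $i$ itself sees later, and you propose a crowding-out monotonicity lemma to control this, which the paper simply glosses over. The genuine gap is in the cost-truthfulness half. Your points (i)--(iii) show that the thresholds, the scan position, and the per-scan acceptance/price are independent of $b_i$, but that does not make \emph{Hetero-OMG} bid-independent in the sense of the cited definition, because the \emph{quantity} offered to $i$ at later stage-end rescans is $\min\{\tau_i,\,L'+f_i-\sum_{j\in\mathcal{S}}f_j\}$, and the term $\sum_{j\neq i}f_j$ depends on whether $i$ was accepted at earlier scans, i.e.\ on $b_i$. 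Since utility in the heterogeneous model is $f_i(p_i-c_i)$, this cross-time channel has to be ruled out explicitly. Concretely, if $c_i>b^*_t$ when $i$ is first scanned but $i$ underbids with $b_i\leq b^*_t$, it is accepted at an interim loss, crowds other users out of the current stage, and at the next stage end the residual $L'-\sum_{j\neq i}f_j$ available to $i$ --- hence the number of tasks it can be re-allocated at the new, possibly higher, threshold --- is larger than under truthful bidding; symmetrically, overbidding lets others consume tasks and shrinks $i$'s later offers. The paper devotes the bulk of its proof (Proposition (c), with the two cases of $i$ winning or not winning at $t=a_i$, tracking how a false cost changes the residual stage-task-number $L'_{t'}$ and thus the payment at every later $t'\leq d_i$) precisely to this issue; your proposal has no counterpart, and your crowding-out lemma is formulated only for interval manipulation with the bid held fixed, so it does not cover it. Without such an argument, the appeal to the bid-independence proposition (which is stated for a win/lose decision at a single posted price) does not go through.

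A secondary point: your concluding step infers $u_i^A\geq u_i^B$ from $f_i^Ap_i^A\geq f_i^Bp_i^B$. Because both the quantity and the price vary, a larger payment does not by itself imply larger utility (many tasks at a price barely above $c_i$ can pay more yet yield less utility than few tasks at a high price), and the update rule in lines 20--22 retains the maximum-\emph{payment} offer, not the maximum-utility one. The paper's own Proposition (b) argues at the same payment level, so this is not a deviation from the reference proof, but if you carry out the pointwise-domination argument you should exploit that the thresholds $b^*_t$ coincide in the two runs at common time steps (so offers dominate in both quantity and price), and state explicitly what comparison of retained pairs you are making.
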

The proof of Lemma \ref{lemma:time-truthful} is given in Appendix C.

\begin{theorem}
The \emph{Hetero-OMG} mechanism satisfies the computational efficiency, individual rationality, budget feasibility, truthfulness, consumer sovereignty, and constant frugality under the general interval model.
\end{theorem}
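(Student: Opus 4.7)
The plan is to reuse as much of the Hetero-OMZ analysis as possible, and concentrate the real work on the two new ingredients introduced by the general interval model: time-truthfulness and preservation of bid-independence when a user may remain online across multiple stages. Concretely, I would first observe that \emph{computational efficiency} still reduces to one invocation of \textbf{GetBidThreshold2} per stage plus, at each time step, a scan over the currently online set $\mathcal{O}$ sorted by $\tau_j$; both are polynomial and the per-step work is $O(|\mathcal{O}|\log|\mathcal{O}|)$, which mirrors Lemma~\ref{lemma:computational efficiency}. \emph{Individual rationality} follows because every update in lines 7 and 20 assigns $p_i=b^*\ge b_i$, and payments are never decreased, so $u_i=f_i(p_i-c_i)\ge 0$. \emph{Consumer sovereignty} holds for the same reason as in Hetero-OMZ: no stage is reserved purely for sampling, and any user whose bid is sufficiently low (and $\tau_i$ large enough) will be picked either upon arrival or at the next stage boundary.

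For \emph{cost-truthfulness}, the plan is to show the auction remains bid-independent in the sense of the extended definition. The critical point is that a user $i$ is inserted into the sample set $\mathcal{S}'$ only at the moment it departs (line 14), so every bid threshold $b^*$ that is ever compared with $b_i$ or paid to $i$ is computed from bids of users whose entire arrival-departure intervals precede $i$'s departure; none of them depends on $b_i$. Moreover, the tie-breaking rule that orders online users by $\tau_j$ (which is hard information the crowdsourcer can verify) rather than by bid removes the other channel through which $b_i$ could influence $i$'s own allocation. A standard case split on whether $c_i$ is at most the maximum threshold that $i$ ever sees during $[a_i,d_i]$ then shows that truthful bidding is dominant, exactly as in Lemma~\ref{lemma:cost-truthfulness}.

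The main obstacle, and the part I expect to be delicate, is \emph{time-truthfulness}. The strategy space only allows $a_i\le\hat a_i\le\hat d_i\le d_i$, so I must rule out two deviations: reporting a later arrival and reporting an earlier departure. The key invariant I would establish is monotonicity of the per-task payment: because lines 17--21 update $(f_i,p_i)$ only when the new $b^*$ (combined with the available residual stage-task-number) strictly increases the total payment $f_i p_i$, the running payment of any selected user is non-decreasing along $[a_i,d_i]$. From this, delaying $\hat a_i$ can only shorten the interval over which $i$'s payment gets refreshed, so it cannot increase $u_i$; reporting an earlier $\hat d_i$ removes later opportunities for a beneficial update and, in addition, may move $i$ into $\mathcal{S}'$ prematurely, which cannot improve its utility either (by bid-independence the current $b^*$ for $i$'s own selection never depends on $\mathcal{S}'$ contributions that come later than $i$ itself). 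Combining these two observations yields time-truthfulness, and together with the previous step this gives the full truthfulness claim.

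Finally, for \emph{constant frugality}, I would argue that the Hetero-OMZ analysis (Lemmas~\ref{lemma:WWW}--\ref{lemma:secretary}) carries over essentially unchanged once one notes two facts: the sample $\mathcal{S}'$ at the end of stage $\lfloor\log_2 T\rfloor$ still contains every user whose departure occurred before $\lfloor T/2\rfloor$, which in the i.i.d.\ and secretary models still yields the required concentration of $f(\mathcal Z_1)$ and $f(\mathcal Z_2)$; and the end-of-stage re-scan in lines 16--24 can only \emph{decrease} the total payment needed to allocate $L'$ tasks at that stage, since it exploits the updated $b^*$ on users already online. Hence the same choices $\delta=2$ (i.i.d.) and $\delta=8$ (secretary) give a constant frugality ratio, possibly a small constant larger than in Hetero-OMZ due to the extra payment refreshes, and the theorem follows by combining these pieces with Lemma~\ref{lemma:time-truthful}.
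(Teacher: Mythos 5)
Your overall plan matches the paper's: everything except truthfulness is inherited from the Hetero-OMZ analysis with ``almost the same proof,'' and the real content is the cost- and time-truthfulness argument (the paper's Lemma \ref{lemma:time-truthful}, proved in Appendix C). Your time-truthfulness step is essentially the paper's Proposition (b): since a winner is always paid the maximum payment attained during its reported interval and payments are only refreshed upward, reporting a later arrival or earlier departure can only shrink the window over which that maximum is taken.

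The gap is in your cost-truthfulness step. You claim that because user $i$ enters $\mathcal{S}'$ only at departure (so no threshold $b^*$ it ever faces depends on $b_i$) and because online users are ordered by $\tau_j$ rather than by bid, ``the other channel through which $b_i$ could influence $i$'s own allocation'' is removed, and a standard case split finishes the proof. That is not true: in the general interval model a user's bid at one time step influences, through the residual stage-task-number $L'-\sum_{j\in\mathcal{S}}f_j$, the number of tasks (and hence the total payment $f_i p_i$) that the \emph{same} user can be granted at later time steps or at the end-of-stage re-scan within its own interval. For instance, by overbidding at its arrival a user avoids early selection in the hope of being allocated more tasks at a later, higher threshold --- exactly the manipulation that broke Hetero-OMZ in Example \ref{example2}. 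The paper's Proposition (c) is devoted to closing this channel: it shows that within a stage the achievable allocation only shrinks over time, and across stage boundaries a false high bid only lets other users consume tasks so that $L'_{t'}$ (together with the $+f_i$ credit in line 20) is never larger than under truthful bidding, while a false low bid when $c_i>b^*$ yields negative utility without improving later opportunities. Your sketch asserts bid-independence of the price but never argues bid-independence (or monotone non-improvability) of the future \emph{quantity}, which is the heart of the paper's argument; without it the truthfulness claim, and hence the theorem, is not established.
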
 

\section{Performance Evaluation}
\label{sec:performance evaluation}
To evaluate the performance of our online mechanisms, we implemented \emph{Homo-OMZ}, \emph{Hetero-OMZ} and \emph{Hetero-OMG}, and compared them against the following two benchmarks.
The first benchmark is the \emph{optimal} offline solution which has full knowledge about all users' profiles.
This can be done by using a simple greedy algorithm as illustrated in Algorithm \ref{alg:get threshold2} (lines 1-6), which sorts users according to their bids, and preferentially allocates tasks to users with lower bids until that all of tasks have been allocated.
The second benchmark is the \emph{random} mechanism, which adopts a naive strategy, i.e., rewards users based on an uninformed fixed bid threshold.

\subsection{Evaluation under Homogeneous User Model and Zero Arrival-departure Interval Model}
\label{subsec:homo evaluation}
We first evaluate the \emph{Homo-OMZ} and \emph{Hetero-OMZ} mechanisms under the homogeneous user model and zero arrival-departure interval model.

\underline{Simulation Setup}:
We set the deadline to $T$=1800s, and vary the number $L$ of tasks to be completed from 100 to 400 with the increment of 100.
Users arrive according to a Poisson process in time with arrival rate $\lambda=0.6$.
For each user $i \in \mathcal{U}$, we set that: $a_i=d_i$, $\tau_i=1$, $c_i\sim U[1,10]$, where $U[a,b]$ denotes a uniform distribution over $[a,b]$.
For the \emph{Hetero-OMZ} mechanism, we set $\delta=1$ and $\delta=2$ respectively, and $\beta=10$.
As we proved in Lemma \ref{lemma:iid}, when $\delta=2$ the \emph{Hetero-OMZ} mechanism is constant-\emph{frugal}.
Here we also set $\delta=1$ for comparison.
For the \emph{optimal} offline mechanism, we compute the minimum cost of performing $L$ and $2L$ tasks in order to obtain frugality ratios under the two frameworks of ``\emph{idealistic frugality}" and ``\emph{realistic frugality}", respectively.
For the \emph{random} mechanism, we obtain the average performance of 50 such solutions for evaluations, where in each solution the bid threshold is chosen at random from the range of 1 to 10.
\begin{figure}[!t]
  \centering{
  \subfigure[The total payment]{
    \label{fig_homoPayment}
    \includegraphics[height=1.8in]{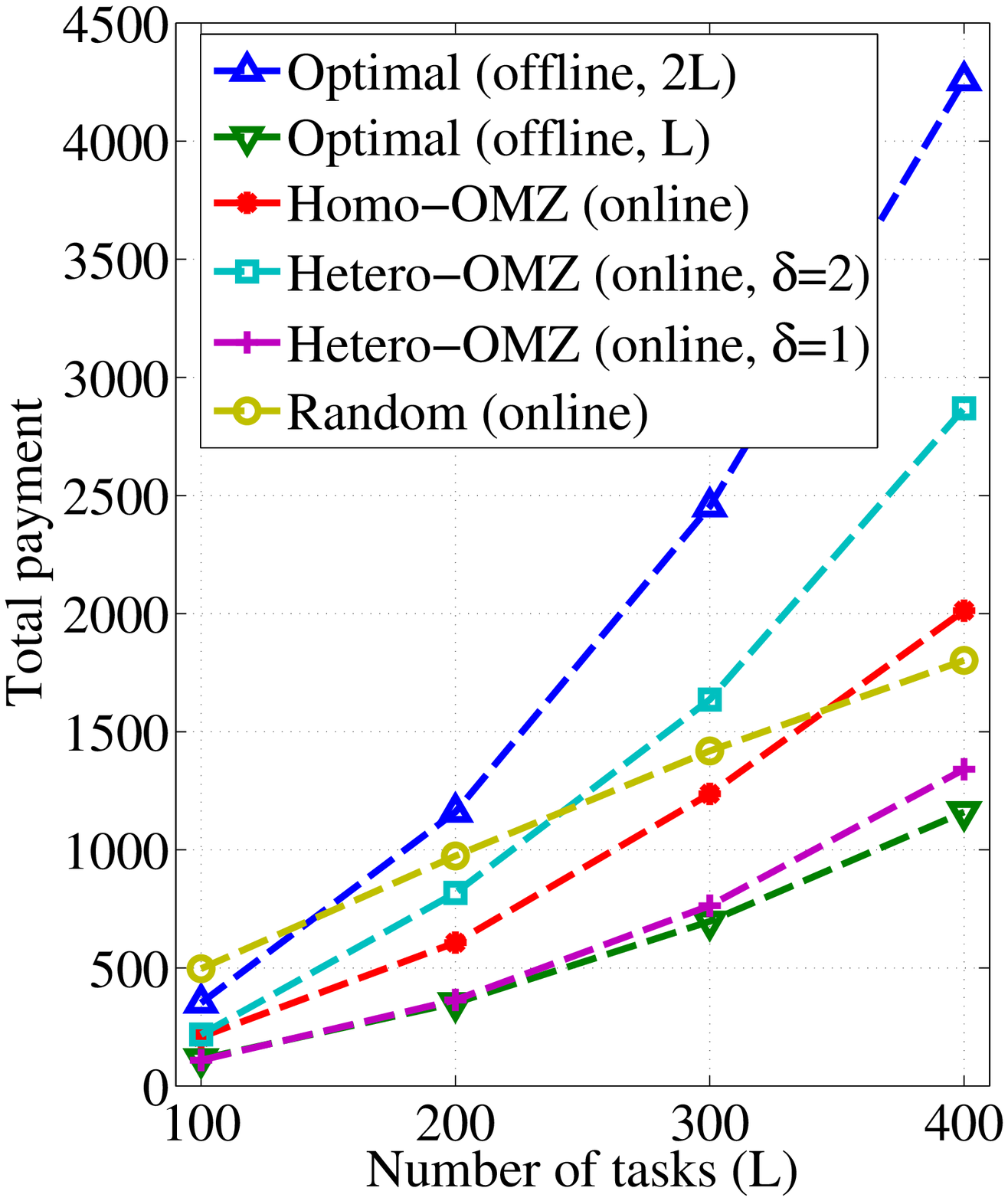}}
  \subfigure[The number of completed tasks]{
    \label{fig_homoTask}
    \includegraphics[height=1.8in]{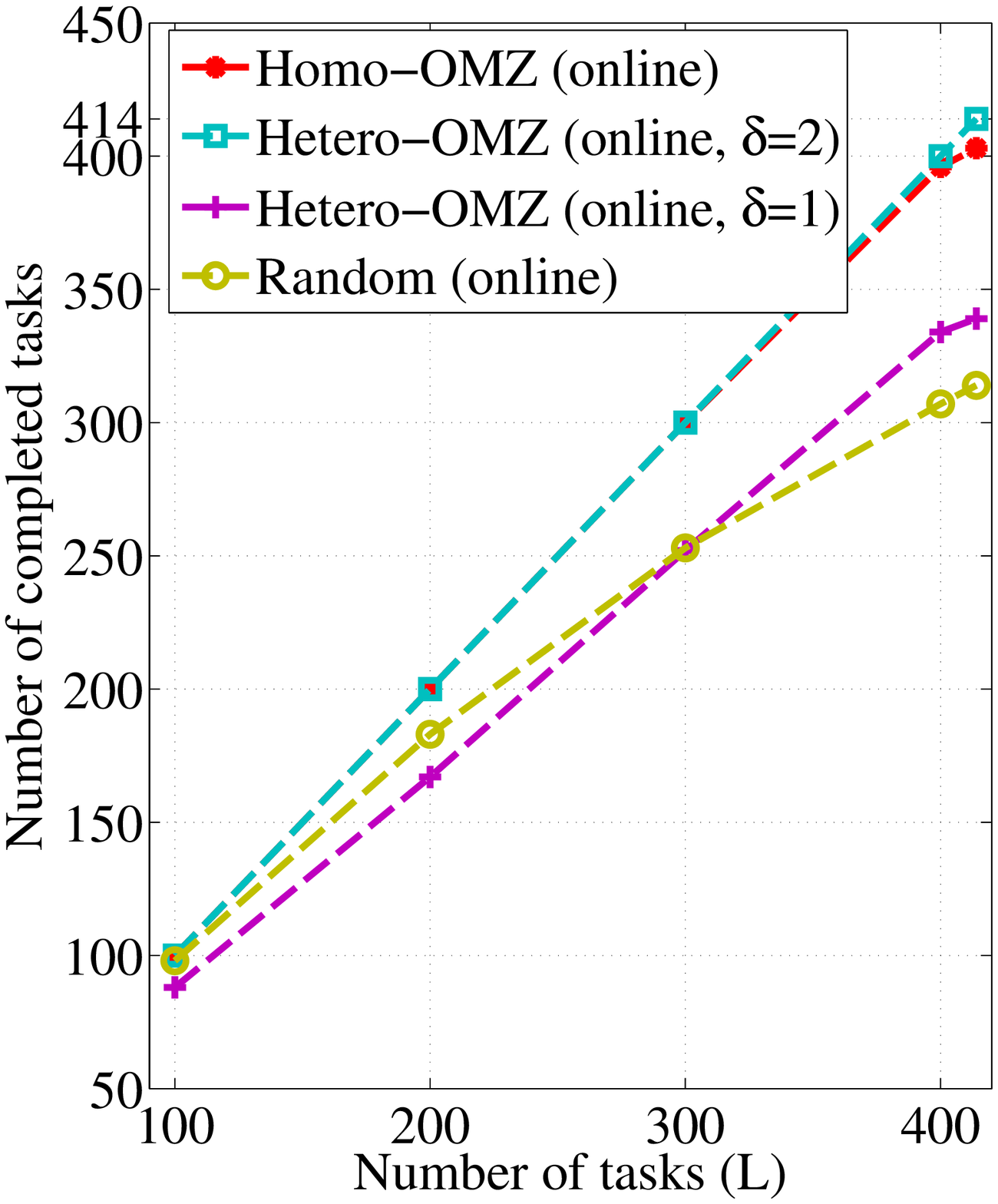}}
  }
  \caption{Evaluation results under the homogeneous user model and zero arrival-departure interval model.}
  \label{fig_HomoResults} 
\end{figure}

\underline{Simulation Results}:
Fig. \ref{fig_homoPayment} compares the crowdsourcer's total payment required by the \emph{Homo-OMZ} and \emph{Hetero-OMZ} mechanisms against the two benchmarks.
Fig. \ref{fig_homoTask} compares the number of tasks completed by the three online mechanisms.
From the simulation results, we can observe the following four phenomena:
\begin{itemize*}
\item The total payments of all evaluated mechanisms increase with the number of tasks, and the growth rate is larger than 1 (except for the \emph{random} mechanism).
It is because that the user set is limited, and the users with higher costs must be selected if more tasks are required to be completed.
\item Both of the \emph{Homo-OMZ} and \emph{Hetero-OMZ} mechanisms outweigh the \emph{random} mechanism.
Note that, although the \emph{random} mechanism has less total payment under some cases, it has many tasks uncompleted, and the average price per task is higher than that of the other mechanisms.
\item Only the \emph{Hetero-OMZ} mechanism with $\delta=2$ can complete all of required tasks under various cases, although it requires higher total payment than the \emph{Homo-OMZ} mechanism and the \emph{Hetero-OMZ} mechanism with $\delta=1$. Note that, at most 828 tasks can be completed (by the optimal offline mechanism) in our simulations, and a half of these tasks (414 tasks) can be completed by the \emph{Hetero-OMZ} mechanism with $\delta=2$ (see Fig. \ref{fig_homoTask}).
    In fact, it implies a trade-off between the frugality and the number of completed tasks, which should be considered by the mechanism designer.
    Anyway, the \emph{Hetero-OMZ} mechanism is a good choice to guarantee the completion of required tasks without sacrificing the frugality.
\item The total payment of the \emph{Hetero-OMZ} mechanism with $\delta=2$ is less than the \emph{optimal} offline mechanism with $2L$ tasks, indicating that the ``realistic" frugality ratio is less than 2, which is consistent with our theoretical analysis in Lemma \ref{lemma:iid}.
    Moreover, by comparing the total payments of the \emph{Hetero-OMZ} mechanism with $\delta=2$ and the \emph{optimal} offline mechanism with $L$ tasks, we have that the ``idealistic" frugality ratio is less than 2.5. Anyway, it has been verified that the \emph{Hetero-OMZ} mechanism can guarantee the constant frugality.
\end{itemize*}

\subsection{Evaluation under Heterogeneous User Model}
Now we evaluate the \emph{Hetero-OMZ} and \emph{Hetero-OMG} mechanisms under the heterogeneous user model.

\underline{Simulation Setup}:
We set the deadline to $T$=1800s, and vary $L$ from 100 to 1000 with the increment of 100.
Users arrive according to a Poisson process in time with arrival rate $\lambda$, and we vary $\lambda$ from 0.2 to 1 with the increment of 0.2.
For each user $i \in \mathcal{U}$, we set that: $\tau_i\sim U[1,10]$, $c_i\sim U[1,10]$.
For the \emph{Hetero-OMZ} mechanism, each user has zero arrival-departure interval.
For the \emph{Hetero-OMG} mechanism, the arrival-departure interval of each user is uniformly distributed over [0,300] seconds.
For both of the above two mechanisms, we set $\delta=2$ and $\beta=10$.
The settings of the \emph{optimal} offline mechanism and the \emph{random} mechanism are the same as in Section \ref{subsec:homo evaluation}.

\underline{Simulation Results}:
Fig. \ref{fig_HeteroResults1} compares the crowdsourcer's total payment required by the \emph{Hetero-OMZ} and \emph{Hetero-OMG} mechanisms against the two benchmarks .
Fig. \ref{fig_HeteroResults2} compares the average price per task required by various mechanisms.
From the simulation results, we can observe the following four phenomena:
\begin{itemize*}
\item From Fig. \ref{fig_payment_rate} and Fig. \ref{fig_price_rate}, we can observe that all evaluated mechanisms (expect for the \emph{random} mechanism) require less total payment and average price per task when more users participate.
    From Fig. \ref{fig_payment_task} and Fig. \ref{fig_price_task}, we can observe that all evaluated mechanisms require higher total payment and average price per tasks (except for the \emph{random} mechanism) when the number of tasks increases.
\item Both of the \emph{Hetero-OMZ} and \emph{Hetero-OMG} mechanisms outweigh the \emph{random} mechanism in terms of both the total payment and average price per task.
Note that, although the \emph{random} mechanism has less total payment or average price per task under some cases, it has many tasks uncompleted.
By contrast, both of the other mechanisms can complete all required tasks.
\item The total payment of the \emph{Hetero-OMZ} mechanism is less than the \emph{optimal} offline mechanism with $2L$ tasks, indicating that the ``realistic" frugality ratio is less than 2, which is consistent with our theoretical analysis in Lemma \ref{lemma:iid}.
    Moreover, by comparing the total payments of the \emph{Hetero-OMZ} mechanism and the \emph{optimal} offline mechanism with $L$ tasks, we have that the ``idealistic" frugality ratio is less than 2.4. Anyway, it has been verified that the \emph{Hetero-OMZ} mechanism can guarantee the constant frugality.
\item The total payment of the \emph{Hetero-OMG} mechanism is very close to the \emph{optimal} offline mechanism with $2L$ tasks, indicating that the ``realistic" frugality ratio is close to 2.
    Besides, the \emph{Hetero-OMG} mechanism requires higher total payment and average price per tasks than the \emph{Hetero-OMZ} mechanism in order to guarantee the time-truthfulness.
\end{itemize*}

\begin{figure}[!t]
  \centering{
  \subfigure[Impact of $\lambda$ ($L=500$)]{
    \label{fig_payment_rate}
    \includegraphics[height=1.9in]{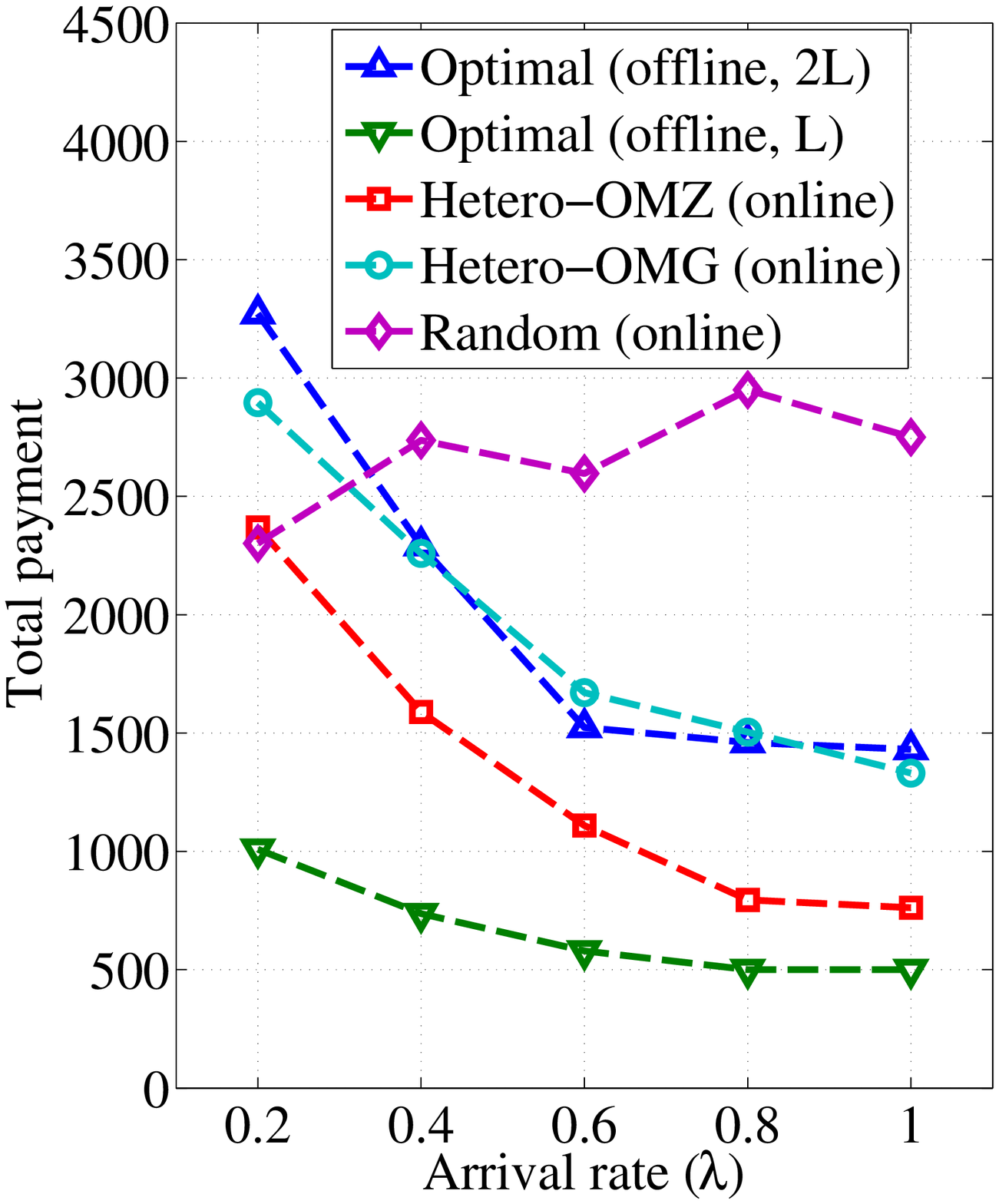}}
  \subfigure[Impact of $L$ ($\lambda=0.6$)]{
    \label{fig_payment_task}
    \includegraphics[height=1.9in]{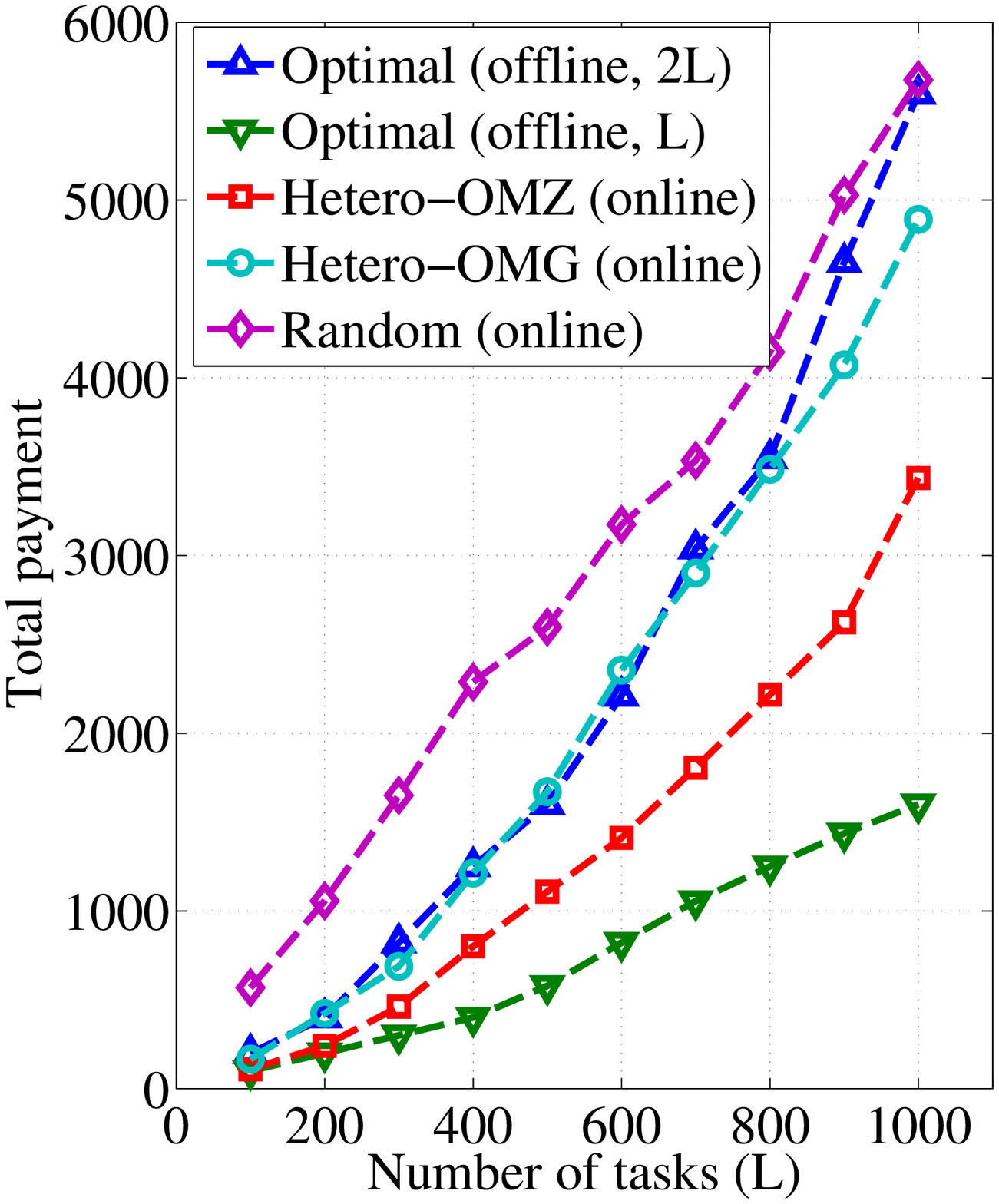}}
  }
  \caption{The total payment under the heterogeneous user model.}
  \label{fig_HeteroResults1} 
\end{figure}

\begin{figure}[!t]
  \centering{
  \subfigure[Impact of $\lambda$ ($L=500$)]{
    \label{fig_price_rate}
    \includegraphics[height=1.9in]{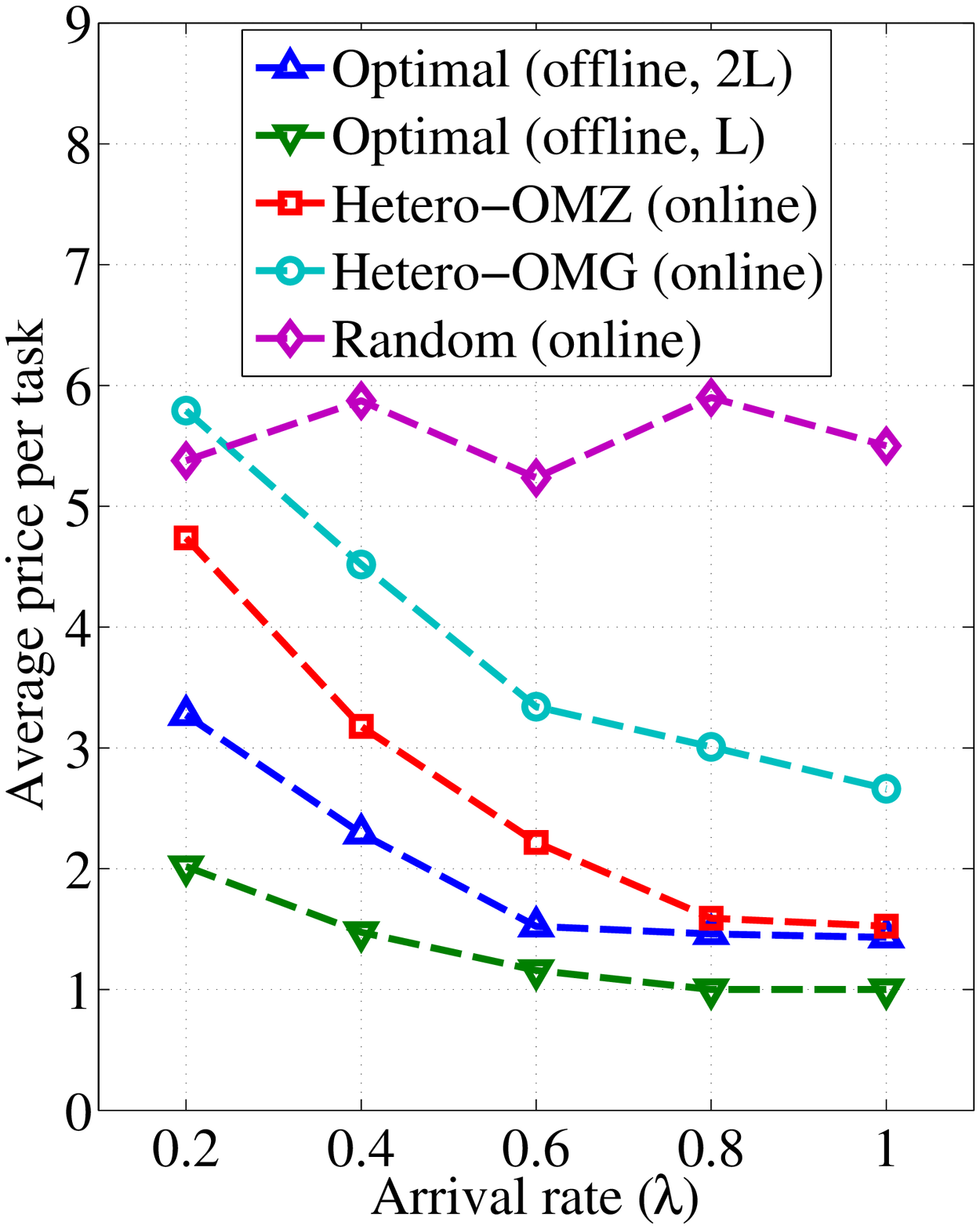}}
  \subfigure[Impact of $L$ ($\lambda=0.6$)]{
    \label{fig_price_task}
    \includegraphics[height=1.9in]{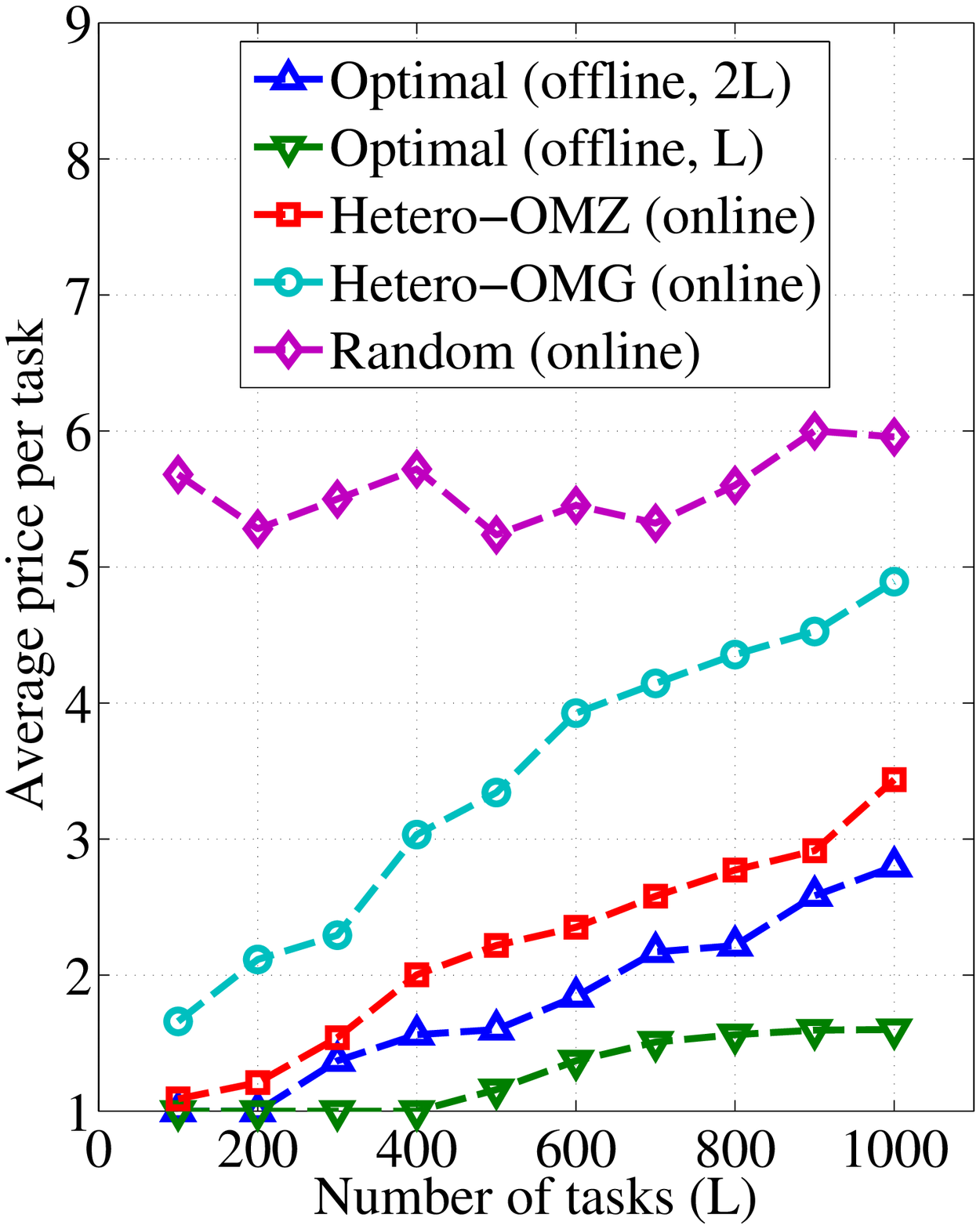}}
  }
  \caption{The average price per task under the heterogeneous user model.}
  \label{fig_HeteroResults2} 
\end{figure}

\section{Related Work}
\label{sec:related work}
\subsection{Mechanism Design for Mobile Crowd Sensing}
Reddy et al. \cite{reddy2010recruitment} developed recruitment frameworks to enable the crowdsourcer to identify well-suited participants for data collections.
However, they focused only on the user selection instead of the incentive mechanism design.
At present, there are some studies \cite{danezis2005much,lee2010sell,duan2012incentive,yang2012crowdsourcing,jaimes2012location} on incentive mechanism design for MCS applications in the offline scenario.
Generally, two system models are considered: the platform/crowdsourcer-centric model where the crowdsourcer provides a fixed reward to participating users, and the user-centric model where users can have their reserve prices for the sensing service.
For the crowdsourcer-centric model, incentive mechanisms were designed by using a Stackelberg game \cite{yang2012crowdsourcing,duan2012incentive}.
The Nash Equilibrium and Stackelberg Equilibrium were computed as the solution, where the costs of all users or their probability distribution was assumed to be known.
In contrast, the user-centric model allows that each user has a private cost only known to itself.
Danezis et al. \cite{danezis2005much} developed a sealed-bid second-price auction to estimate the users' value of sensing data with location privacy.
Lee and Hoh \cite{lee2010sell} designed and evaluated a reverse auction based dynamic price incentive mechanism, where users can sell their sensed data to a service provider with users' claimed bids.
Jaimes et al. \cite{jaimes2012location} proposed a recurrent reverse auction incentive mechanism with a greedy algorithm that selects a representative subset of the users according to their location given a fixed budget.
Yang et al. \cite{yang2012crowdsourcing} designed an auction-based incentive mechanism, and proved this mechanism was computationally efficient, individually rational, profitable, and truthful.
However, all of these studies failed to account for the online arrival of users.

Recently, some researchers have begun to focus on online mechanism design for crowdsourcing markets \cite{singer2013pricing,singla2013truthful,badanidiyuru2012learning,zhao2014crowdsourcing,zhang2014free}.
Singer et al. \cite{singer2013pricing} and Singla et al. \cite{singla2013truthful} presented incentive mechanisms for maximizing a linear utility function based on the bidding model and the posted price model respectively.
Badanidiyuru et al. \cite{badanidiyuru2012learning} and Zhao et al. \cite{zhao2014crowdsourcing} considered incentive mechanisms for maximizing a submodular utility function.
However, all of the above studies mainly focus on budget feasible mechanisms.
Zhang et al. \cite{zhang2014free} considered incentive mechanisms for maximizing the platform utility, which is defined as the total value of selected users minus the total cost.
Only Singer et al. \cite{singer2013pricing} considered the frugal mechanisms partially, but it failed to consider the \emph{consumer sovereignty} and \emph{time-truthfulness}.

\subsection{Online Auctions and Generalized Secretary Problems}
\emph{Online auction} is the essence of many networked markets, in which information about goods, agents, and outcomes is revealed one by one online in a random order, and the agents must make irrevocable decisions without knowing future information.
Combining optimal stopping theory with game theory provides us a powerful tool to model the actions of rational agents in an online auction.
The theory of \emph{optimal stopping} is concerned with the problem of choosing a time to take a particular action, in order to maximize an expected reward or minimize an expected cost.
A classic problem of optimal stopping theory is the \emph{secretary problem}: designing an algorithm for hiring one secretary from a pool of $n$ applicants arriving online, to maximize the probability of hiring the best secretary \cite{dynkin1963optimum}.
Many variants of the classic secretary problem have been studied in the literature and the most relevant to this work is the \emph{k-choice secretary problem}, in which the interviewer is allowed to hire up to $k \geq 1$ applicants in order to maximize performance of the secretarial group based on their overlapping skills.
Kleinberg \cite{kleinberg2005multiple} and Babaioff et al. \cite{babaioff2007knapsack} presented two constant competitive algorithms for a special $k$-choice secretary problem in which the objective function is a linear one, equaling to the sum of the individual values of selected applicants.
They could also be used for minimizing the total cost of selected applicants in essence.
However, they failed to consider the \emph{truthfulness} and \emph{consumer sovereignty}.
Although some solutions (\cite{hajiaghayi2004adaptive,babaioff2008online,kleinberg2005multiple}) of online auctions provided good ideas of designing truthful mechanisms, they still could not satisfy the \emph{consumer sovereignty}.
Moreover, all of these solutions only applied to the homogeneous user model instead of the heterogeneous user model, and could only satisfy the social efficiency instead of the frugality.

\section{Conclusions and Future Work}
\label{sec:conclusion}
In this paper, we have investigated online incentive mechanisms for mobile crowd sensing.
We focus on frugal mechanisms which aim at minimizing the total payment while a specific number of tasks can be completed.
We have designed three online mechanisms that are applicable to different models and hold different properties.
The \emph{Homo-OMZ} mechanism is applicable to the homogeneous user model and can satisfy the \emph{social efficiency} but not \emph{constant frugality}.
The \emph{Hetero-OMZ} and \emph{Hetero-OMG} mechanisms are applicable to both the homogeneous and heterogeneous user models, and can satisfy the \emph{constant frugality}.
The \emph{Hetero-OMG} mechanism can also satisfy the \emph{time-truthfulness} under a general interval model.
Besides, all of these three mechanisms can satisfy the \emph{computational efficiency}, \emph{individual rationality}, \emph{cost-truthfulness}, and \emph{consumer sovereignty}.

An interesting open problem is to design frugal online incentive mechanisms for more complex scenarios.
For example, each user can complete a subset of tasks, and the crowdsourcer wants to complete the whole set of tasks, or obtain a specific value from selected users where the value function is submodular.


\bibliographystyle{IEEEtran}
\bibliography{myRef}
\section*{APPENDIX}
\subsection{Proof of Lemma \ref{lemma:iid}}
Since the numbers of tasks that users can complete and the respective unit costs are i.i.d., each user can be selected in the set $\mathcal{Z}$ with the same probability. The sample set $\mathcal{S}'$ is a random subset of $\mathcal{U}$ since all users arrive in a random order.
Therefore, the number of users from $\mathcal{Z}$ in the sample set $\mathcal{S}'$ follows a hypergeometric distribution $H(n/2,|\mathcal{Z}|,n)$.
Thus, we have $\mathbb{E}[|\mathcal{Z}_1|]=\mathbb{E}[|\mathcal{Z}_2|]=|\mathcal{Z}|/2$.
The number of tasks allocated to each user can be seen as an i.i.d. random variable, and because $f(\mathcal{X})$ is a linear function, it can be derived that: $\mathbb{E}[f(\mathcal{Z}_1)]=\mathbb{E}[f(\mathcal{Z}_2)]= \mathbb{E}[f(\mathcal{Z})]/2$.
The expected total payments to the users from both $\mathcal{Z}_1$ and $\mathcal{Z}_2$ are $B$.
Since $f(\mathcal{Z}_1')$ is computed under budget $B$, it can be derived that: $\mathbb{E}[f(\mathcal{Z}_1')]= \mathbb{E}[f(\mathcal{Z}_1)]= \mathbb{E}[f(\mathcal{Z})]/2$, and $\mathbb{E}[p_1']=\mathbb{E}[p]$.

We consider two cases according to the total payment to the selected users at the last stage as follows.
\begin{list}{}{\setlength{\leftmargin}{0cm}}
\item \textbf{Case 1)}: The budget $B$ is exhausted.

In this case, since the price per task paid to each selected user is $p_1'$, so we have that
\[f(\mathcal{Z}_2') = \frac{B}{p_1'} \geq f(\mathcal{Z}_1')\geq \frac{\delta L}{4} \geq \frac{L}{2},\]
where the first inequality follows from the fact that the total payment to $\mathcal{Z}_1'$ should be no more than the budget $B$, i.e., $p_1'f(\mathcal{Z}_1')\leq B$.
\item \textbf{Case 2)}: The budget $B$ is not exhausted.

Since $\mathbb{E}[p_1']=\mathbb{E}[p]$, each user $i\in \mathcal{Z}_2$ will be allocated in $\mathcal{Z}_2'$.
It implies that:
\[\mathbb{E}[f(\mathcal{Z}_2')] = \mathbb{E}[f(\mathcal{Z}_2)] \geq \frac{L}{2}.\]
\end{list}

Combining case 1) and case 2), we have that at least $L/2$ tasks will be allocated at the last stage in expectation.

\subsection{Proof of Lemma \ref{lemma:secretary}}
  We consider two cases according to the total payment to the selected users at the last stage as follows.
\begin{list}{}{\setlength{\leftmargin}{0cm}}
\item \textbf{Case 1)}: The budget $B$ is exhausted.

In this case, since the price per task paid to each selected user is $p_1'$, so we have that
\[f(\mathcal{Z}_2') = \frac{B}{p_1'} \geq f(\mathcal{Z}_1')\geq \frac{\delta L}{4} \geq 2L,\]
where the first inequality follows from the fact that the total payment to $\mathcal{Z}_1'$ should be no more than the budget $B$, i.e., $p_1'f(\mathcal{Z}_1')\leq B$.
\item \textbf{Case 2)}: The budget $B$ is not exhausted.

Because $p_1'$ is computed over a smaller subset compared with $p$, we have $p_1'\geq p$.
Therefore, for each user $i\in \mathcal{Z}_2$ it follows that $b_i\leq p \leq p_1'$, and all users in $\mathcal{Z}_2$ will be allocated.
It implies that:
\[f(\mathcal{Z}_2') \geq f(\mathcal{Z}_2) \geq \frac{f(\mathcal{Z})}{4} \geq \frac{f(\mathcal{Z}_1')}{4} \geq \frac{L}{2},\]
where the second inequality follows from Corollary \ref{corollary}, and the third inequality follows from the fact that $f(\mathcal{Z})$ is computed over the whole user set $\mathcal{U}$, while $f(\mathcal{Z}_1')$ is computed over a smaller sample set $\mathcal{S}'$ with the same budget $B$.
\end{list}

Combining case 1) and case 2), we have that at least $L/2$ tasks will be allocated at the last stage with a constant probability.

\subsection{Proof of Lemma \ref{lemma:time-truthful}}
Consider a user $i$ with true type $\theta_i=(a_i,d_i,\tau_i,c_i)$, and reported strategy type $\hat{\theta_i}=(\hat{a_i},\hat{d_i},\tau_i,b_i)$.
According to the \emph{Hetero-OMG} mechanism, at each time step $t\in [\hat{a_i},\hat{d_i}]$, there may be a new decision on how many tasks allocated to user $i$, and at what price per task.
For convenience, let $T'_t$, $L'_t$, $b^*_t$, and $\mathcal{S}_t$ denote the end time of the current stage, the residual number of tasks, the current bid threshold, and the set of selected users respectively at time step $t$ and before making decision on user $i$.
Let $\hat{\theta}_{-i}$ denote the strategy types of all users excluding $\hat{\theta_i}$.
We first prove the following two propositions.

\emph{Proposition (a): at some time step $t\in [\hat{a_i},\hat{d_i}]$, fix $b^*_t$ and $L'_t$, reporting the true cost is a dominant strategy for user $i$.}
It can be easily proved since the decision at time step $t$ is bid-independent.

\emph{Proposition (b): fix $b_i$ and $\hat{\theta}_{-i}$, reporting the true arrival/departure time is a dominant strategy for user $i$.}
It's because that user $i$ is always given a payment equal to the maximum payment attained during its reported arrival-departure interval.
Assume that user $i$ can obtain the maximum payment at time step $t\in [\hat{a_i},\hat{d_i}]$.
Then reporting an earlier arrival time or a later departure time than $t$ does not affect the payment of user $i$.
However, if user $i$ reports a later arrival time or an earlier departure time than $t$, then it will obtain a lower payment.

Based on the proposition (b), it is sufficient to prove this lemma by adding a third proposition:

\emph{Proposition (c): fix $[a_i,d_i]$ and $\hat{\theta}_{-i}$, reporting the true cost is a dominant strategy for user $i$.}
According to the proposition (a), reporting a false cost at time step $t$ cannot improve user $i$'s payment at the current time.
Thus, it only needs to prove that \emph{reporting a false cost at time step $t \in [a_i,d_i)$ still cannot improve user $i$'s payment at time step $t' (t<t'\leq d_i)$}.
In the following, we consider two cases according to whether user $i$ is selected as a winner by reporting its true type at time step $t=a_i$.

\begin{list}{}{\setlength{\leftmargin}{0cm}}
\item \textbf{Case 1)}: User $i$ is a winner at time step $t=a_i$.

In this case it satisfies $c_i\leq b^*_t$ and $\sum_{j\in \mathcal{S}_t}f_j < L'_t$, and it can obtain the payment $\min\{\tau_i,L'_t-\sum_{j\in \mathcal{S}_t}f_j\} b^*_t$.
At time $t' (t<t'<T'_t)$, user $i$ will obtain the payment $\min\{\tau_i,L'_{t'}-\sum_{j\in \mathcal{S}_{t'}}f_j\} b^*_t$ if it satisfies $b_i\leq b^*_t$ and $\sum_{j\in \mathcal{S}_{t'}}f_j < L'_{t'}$, otherwise it will obtain the payment 0.
Because $L'_{t'}-\sum_{j\in \mathcal{S}_{t'}}f_j \leq L'_t-\sum_{j\in \mathcal{S}_t}f_j$, user $i$ cannot obtain higher payment at time step $t'$ than that at $t$.
It implies that a user cannot improve its payment by reporting a false cost if its arrival-departure interval does not span more than one stage.

Now we consider user $i$'s payment at time step $t'(T'_t \leq t' \leq d_i)$ if its arrival-departure interval spans multiple stages.
According to the proposition (a), user $i$'s payment at time step $t'$ depends on $b^*_{t'}$ and $L'_{t'}$.
Because $b^*_{t'}$ is independent with $b_i$, it only needs to consider the effect of $b_i$ on $L'_{t'}$.
If user $i$ reports a false cost $b_i$ that still satisfies $b_i\leq b^*_t$ and $\sum_{j\in \mathcal{S}_{t}}f_j < L'_{t}$, then it is still accepted with payment $\min\{\tau_i,L'_t-\sum_{j\in \mathcal{S}_t}f_j\} b^*_t$ at time step $t$, and thus $L'_{t'}$ remains unchanged.
If user $i$ reports a larger bid $b_i>c_i$ and $b_i > b^*_t$, then it will not selected at time step $t$.
In this case, more tasks will be allocated to other users, and $L'_{t'}$ will be diminished.
Therefore, user $i$ cannot obtain higher payment at time step $t'$.

\item \textbf{Case 2)}: User $i$ is not a winner at time step $t=a_i$.

In this case it satisfies $c_i>b^*_t$, or $\sum_{j\in \mathcal{S}_t}f_j = L'_t$.
In case $c_i>b^*_t$, if user $i$ reports a false cost $b_i$ which still satisfies $b_i>b^*_t$, then the outcome remains unchanged.
If user $i$ reports a lower bid $b_i<c_i$ and $b_i\leq b^*_t$, then it will be accepted at price $b^*_t$ at time step $t$.
In such case, however, its utility will be negative.
In addition, $L'_{t'}$ remains unchanged, and thus user $i$'s payment at time step $t'>t$ is not affected.
In case $\sum_{j\in \mathcal{S}_t}f_j = L'_t$, reporting a false cost does not affect the outcome at time step $t$ or the residual budget $L'_{t'}$ at time step $t'>t$.
To sum up, reporting a false cost cannot improve user $i$'s payment at time step $t'>t$.
\end{list}

Therefore, the above three propositions together complete the proof. 

\end{document}